\newcommand{\ket}[1]{|#1\rangle}
\newcommand{\COMMENT}[1]{}
\def\defeq{\mathrel{\mathop:}=} 
\newcommand{\je}[1]{{\color{cyan} #1}}
\newtheorem{proposition}{Proposition}
\newtheorem{theorem}{Theorem}
\newtheorem{corollary}{Corollary}
\newtheorem{define}{Definition}
\newtheorem{lemma}{Lemma}
\newtheorem*{observation*}{Observation}
\newenvironment{customcor}[1]
{\innercustomcor}
{\endinnercustomcor}
\begin{document}

	\title{Quantum network routing and local complementation}
	
	\author{F.\ Hahn}\affiliation{Dahlem Center for Complex Quantum Systems, Freie Universit{\"a}t Berlin, 14195 Berlin, Germany}
	\author{A.\ Pappa}\affiliation{Department of Physics and Astronomy, University College London, Gower Street, London WC1E 6BT, UK}
	\author{J.\ Eisert}\affiliation{Dahlem Center for Complex Quantum Systems, Freie Universit{\"a}t Berlin, 14195 Berlin, Germany}
	
	\date{\today}

	\begin{abstract}
Quantum communication between distant parties is based on suitable instances of shared entanglement. For efficiency reasons, in an anticipated quantum network beyond point-to-point communication, it is preferable that many parties can communicate simultaneously over the underlying infrastructure; however, bottlenecks in the network may cause delays. Sharing of multi-partite entangled states between parties offers a solution, allowing for parallel quantum communication. Specifically for the two-pair problem, the butterfly network provides the first instance of such an advantage in a bottleneck scenario. The underlying method differs from standard repeater network approaches in that it uses a graph state instead of maximally entangled pairs to achieve long-distance simultaneous communication. We will demonstrate how graph theoretic tools, and specifically local complementation, help decrease the number of required measurements compared to usual methods applied in repeater schemes. We will examine other examples of network architectures, where deploying local complementation techniques provides an advantage. We will finally consider the problem of extracting graph states for quantum communication via local Clifford operations and Pauli measurements, and discuss that while the general problem is known to be {\tt NP}-complete, interestingly, for specific classes of structured resources, polynomial time algorithms can be identified.
	\end{abstract}
	
	\maketitle
	
	\subsection{Introduction}
	Quantum communication schemes over optical networks necessarily 
	suffer from transmission losses and errors. For this reason, in order to achieve the vision of secure quantum communication over arbitrary distances, several schemes have been proposed that
	are based on entanglement swapping and purification 
	\cite{Briegel98,RepeaterLukin,van2006hybrid,Acin07,Roadmap}. However, such existing ``quantum repeater'' approaches are based on sharing and manipulating close to
	maximally entangled ``EPR'' pairs between the nodes. A lot of emphasis
	has been put onto identifying efficient ways of achieving this task
	\cite{RepeaterLukin,zwerger2012measurement,van2006hybrid,TeleportationReview}, 
	amounting to challenging prescriptions. Yet, for multi-partite quantum networks going beyond point-to-point achitectures, much less is known about how to meaningfully make use of and manipulate resources. 
	This is particularly unfortunate since a number of protocols have been devised for tasks like secret sharing \cite{SecretSharing,bell2014experimental}, quantum voting \cite{PhysRevA.95.062306} and quantum conference key agreement \cite{PhysRevA.97.022307,ConferenceKey,EppingA}, that exploit
	the genuine multi-partite character of a quantum network, having the vision of a
	quantum internet in mind \cite{QuantumInternet}. In fact, one could argue that the true potential
	of quantum communication is expected to lie in such multi-partite applications beyond point-to-point architectures.
	
	Specifically in multi-partite quantum networks, it could well be preferable that the involved processes are run offline, i.e., before a request for communication is received. However, methods like the ones described in Ref.~\cite{SMIKW16} require big quantum memories, as well as a high channel capacity. Consequently, network efficiency is limited by the memory capacities of the quantum repeater stations \cite{zwerger2018long}, as well as by possible bottlenecks imposed by the quantum network architecture. What is more, in many applications, multi-partite resources are required in the first place. 
	In this context, new questions of \emph{quantum routing} emerge.
	We use the term quantum routing as referring to the task of manipulating entangled resources in multi-partite quantum networks between arbitrary nodes, not necessarily making use of local knowledge only, as is common in classical routing, but allowing for global classical communication. 	The key question in this framework is how to optimally establish communication between distant nodes using the intermediate nodes of a quantum network.
	
	In this work, we consider alternative ways for sharing entanglement between distant nodes of a network that have 
	favorable features both with respect to memory and channel capacity. We start from the same setting 
	where nodes that are connected with physical optical links share close to maximally entangled qubit pairs. By
	suitable entanglement swapping steps 
	\cite{PhysRevLett.71.4287,PhysRevA.95.032306,TeleportationReview}, the resulting state is a \emph{graph state} \cite{Hein04,Hein06}. Methods for purifying any graph state via measurements and classical communication have been studied \cite{KMBD06} and applications in quantum networks considered \cite{pirker2017modular, Markham}. As already discussed, setting up the shared quantum state before the actual request for communication, is preferable in terms of efficiency of communication, but also allows for detection and prevention of channel or node failure.

	For a given graph state and a request for communication between two distinct nodes, a straightforward solution would be to find a shortest path between the nodes, create a ``repeater'' line (by isolating the path from its environment), and then perform measurements on the intermediate nodes, thereby creating an EPR pair between the two. However, this approach is far from optimal since it requires measuring a large number of nodes and therefore diminishes the secondary use of the residual quantum state. Here, we propose another method that requires at most as many measurements as this ``repeater'' protocol, in general leaving a larger part of the graph state intact, while simultaneously solving bottleneck issues in the network. 
	The proposed method is based on local complementation \cite{Hein04,bouchetLCorbit}  and is already underlying in the prominent bottleneck example of the butterfly scheme \cite{Butterfly,EppingA}. The painful lack of studies
	in this area is due to the fact that local complementation
	does not provide an advantage in classical network coding, since there is no classical equivalent to the application of local Clifford operations in order to achieve serviceable long-range correlations. 
	Finally, we turn towards the problem of extracting graph states from given larger graph states
	via local Clifford operations and Pauli measurements. Using known results from graph theory \cite{VertexMinor, Hoyer, Beigi, RankWidth}, we discuss that while the general problem is known to be {\tt NP}-complete \cite{Dahlberg18}, for specific classes of more structured resources, polynomial time algorithms can be found. All our schemes are based on local complementation,
	but are genuinely quantum, in the way that genuinely multi-partite quantum graph states are manipulated.

	\subsection{Preliminaries}
	A graph $G=(V,E)$ consists of a finite set of vertices $V\subsetneq \mathbb{N}$ and a set $E\subseteq V \times V$ of edges. 
	Vertices that are connected by an edge are called adjacent.
	The set of all vertices that are adjacent to a given vertex $a$ is called the \emph{neighborhood} of $a$ and denoted by $N_a$. 
	We may write $|G|\defeq |V|$ for the number of vertices.
	Graphs have an adjacency matrix with entries
	\vspace{-0.1in}
	\begin{equation}
	\left(\Gamma_G\right)_{i,j}\defeq\begin{cases} 
	1, & \text{if }(i,j)\in E\\
	0, & \text{if }(i,j)\not\in E
	\end{cases}
	\end{equation}
	associated with them. In this work, we only consider \emph{simple} graphs, i.e., graphs that do not contain edges connecting a vertex to itself, or multiple edges between the same pair of vertices. 	Given a graph $G$, we can prepare a graph state vector $\ket{G}$ associated with it as follows. First, a qubit in 
	$\ket{+}=(\ket{0}+\ket{1})/\sqrt{2}$ is prepared for each of the vertices in $V$. 
	Subsequently, a controlled-$Z$ operation is applied to each pair of qubits that is adjacent in $G$. The resulting graph state vector can thus be written as
	\begin{align}\ket{G}\defeq\prod_{(i,j)\in E} CZ_{i,j} \ket{+}^{\otimes V}.
	\end{align}
	It is important to stress that graph states do not have to be prepared in this fashion. In fact, we here anticipate 	the states to be prepared from EPR pairs and entanglement swapping in a quantum network. 
	Note that local Pauli measurements on a graph state result in a different graph state up to local unitary corrections (cf. Proposition 7 in Ref.\ \cite{Hein06}). Here, we will omit these local corrections for the sake of clarity. 
	In this work we will make use of a graph transformation called local complementation. By $\tau_a(G)$ we denote the graph that results from locally complementing $G$ with respect to the vertex $a$. 
	
	\begin{define}[Local complementation]
		A graph $G=(V,E)$ and vertex $a\in V$ define a graph $\tau_a(G)$ with adjacency matrix
		\begin{equation}
		\Gamma_{\tau_a(G)}\defeq\Gamma_G+\Theta_a \mod 2,
		\end{equation}
		where $\Theta_a$ is the complete graph of the neighborhood $N_a$.
	\end{define}
	
	Local complementation on a graph is equivalent to applying local Clifford gates on the respective graph state \cite{VandenNest1}. In particular, the graph state that results from local complementation with respect to node $a$ of the graph state vector $\ket{G}$, is defined by $\ket{\tau_a(G)}\defeq U_a^\tau\ket{G}$, where $U_a^\tau\defeq ({iX_a})^{1/2} ({-i Z_{N_a}})^{1/2}$. 
	It is possible to verify whether two graph states can be transformed into each other via sequential local complementations in polynomial time  \cite{VandenNest2}.
	As we only consider local Clifford operations and Pauli measurements, the resulting states remain graph states and can be described in terms of the pre-measurement graph with the help of local complementations and $Z$-measurements \cite{Hein06}.

\subsection{Reducing the number of measurements}
We have already argued that sharing graph states between the nodes of a network allows for quicker communication with less requirements for channel capacity and memory than sharing EPR pairs between nodes. However, it is not known, given a shared graph state, what the optimal technique for entanglement sharing between nodes that are not connected via physical links is. An approach equivalent to the well-established repeater networks would be to create a ``path'' that connects the two nodes, and then, via entanglement swapping, create a long distance EPR pair. 

In the following, we will prove that a ``repeater'' method is not optimal regarding the number of measurements to be performed. Having a significantly reduced number of measurements is extremely useful in quantum networks, since it allows us to ``extract'' more entanglement from the shared graph state. 
The \emph{repeater protocol} entails first isolating a path between two nodes $a$ and $b$ by $Z$-measuring the neighborhood of said path  (creating a repeater line) and then connecting $a$ to $b$ via $X$-measurements along the intermediate nodes of the path. The \emph{$X$-protocol} is doing the reverse, first $X$-measuring the intermediate nodes on the path between $a$ and $b$, and then $Z$-measuring everything that is left in the neighborhoods of $a$ and $b$ respectively.  We specifically prove the following theorem in the Appendix.

\begin{theorem}[Creating maximally entangled pairs]\label{thm1}
	We can create an EPR pair between two nodes $a$ and $b$ of an arbitrary graph state using the $X$-protocol with at most as many measurements as with the repeater protocol.
\end{theorem}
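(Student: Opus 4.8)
The plan is to fix the path $P$ that both protocols operate on and to reduce the claim to a comparison of the number of $Z$-measurements alone. Let $P = (a, v_1, \dots, v_k, b)$ be the shortest (hence induced, i.e.\ chordless) path along which both protocols act, write $V(P) = \{a, v_1, \dots, v_k, b\}$ for its vertex set and $N(P)$ for its neighborhood in $G$. Both protocols perform exactly $k$ $X$-measurements, namely on the intermediate vertices $v_1, \dots, v_k$; they differ only in their $Z$-measurements. The repeater protocol $Z$-measures the entire neighborhood of the path before swapping, costing $|N(P) \setminus V(P)|$ measurements, so that its total is $k + |N(P)\setminus V(P)|$. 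It therefore suffices to show that the $X$-protocol needs at most $|N(P)\setminus V(P)|$ cleanup $Z$-measurements.

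The central claim is that once the intermediate vertices have been $X$-measured, every vertex still adjacent to $a$ or to $b$ (other than the newly created $a$--$b$ edge itself) lies in $N(P)\setminus V(P)$; the cleanup step then $Z$-measures a subset of $N(P)\setminus V(P)$ and we are done. To see this I would track the edges created by the measurements using the local-complementation description of Pauli measurements recalled above. An $X$-measurement of a vertex $v$ is implemented by local complementations at $v$ and at a chosen neighbor $b_0 \in N_v$, followed by deletion of $v$; crucially, $\tau_c$ only toggles edges \emph{within} $N_c$ and leaves every edge incident to a vertex outside $N_c$ untouched. I would always choose the special neighbor $b_0$ to be a \emph{path} vertex (which is possible since consecutive path vertices remain adjacent throughout), so that every local complementation performed is anchored at a vertex whose neighborhood is contained in the region $R \defeq V(P) \cup (N(P)\setminus V(P))$.

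The main obstacle, and the part requiring genuine care, is to make the locality statement precise as the graph evolves: measuring the $v_i$ one at a time changes the graph, so I would argue by induction that after each measurement (i) the neighborhoods of all still-present path vertices remain contained in $R$, and (ii) $a$ and $b$ are joined to the remaining path by the expected edges so that the next special neighbor can again be taken on the path. Point (i) follows because a new edge at a path vertex can only be produced by a complementation $\tau_c$ having both endpoints in $N_c \subseteq R$, whence both endpoints lie in $R$; point (ii) is the standard fact that $X$-measuring the interior of an induced path performs entanglement swapping and ultimately bonds $a$ to $b$. Granting the induction, no vertex outside $R$ ever becomes adjacent to $a$ or $b$, and the deleted vertices are exactly $v_1,\dots,v_k$, so the final neighborhood of $\{a,b\}$ (minus the $a$--$b$ edge) is contained in $N(P)\setminus V(P)$. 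The cleanup therefore costs at most $|N(P)\setminus V(P)|$ $Z$-measurements, giving an $X$-protocol total of at most $k + |N(P)\setminus V(P)|$, which matches the repeater total and proves the theorem; any vertex of $N(P)\setminus V(P)$ that ends up detached from both $a$ and $b$ makes the inequality strict.
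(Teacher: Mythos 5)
Your proposal is correct and shares the paper's overall skeleton: fix the same shortest path for both protocols, observe that both perform exactly $k$ $X$-measurements so the comparison reduces to counting $Z$-measurements, and then show that the $X$-protocol's cleanup set is contained in the path's initial neighborhood minus the path vertices. This is precisely the reduction the paper isolates in its ``Minimizing measurements'' observation and in Eq.~\eqref{repeaterprotocolcounting}. Where you genuinely diverge is in how the key containment is proven. The paper chooses $a=v_1$ as the special neighbor for every $X$-measurement and then grinds through explicit symmetric-difference recursions for the evolving neighborhoods ($N^{(t)}_{v_1}=N^{(t-1)}_{v_{t+1}}\setminus\{v_1\}$, the formula for $N^{(t)}_{v_{t+2}}$, and $N^{(t)}_{v_{t+3}}=N^{(0)}_{v_{t+3}}$), from which the containment $N^{(l-2)}_{a}\cup N^{(l-2)}_{b}\subsetneq N^{(0)}_{v_1}\cup\cdots\cup N^{(0)}_{v_l}$ is read off. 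You instead run a locality invariant: every local complementation performed is anchored at a vertex whose neighborhood lies in $R=V(P)\cup N(P)$, complementation only toggles edges inside the anchor's neighborhood, hence no edge ever leaves $R$ and the final neighborhoods of $a$ and $b$ stay inside $R$. Your argument is shorter and more conceptual, and it makes transparent \emph{why} the theorem holds; what it buys less cheaply is the inductive step you label ``standard'': that after each measurement the surviving path remains chordless, consecutive path edges are untouched, and $a$ is bonded to the next intermediate vertex so that a path-vertex special neighbor is always available. That step is not automatic for a path embedded in a larger graph (it uses chordlessness to check that the toggled edge sets $E(N_a,N_{v_1})$, $E(N_a\cap N_{v_1},N_a\cap N_{v_1})$, $E(\{a\},N_{v_1}\setminus\{a\})$ contain no path--path pairs other than $(a,v_2)$), and verifying it is exactly the content of the paper's recursions. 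The paper's heavier computation also pays off downstream: the closed-form description of the post-measurement neighborhoods is what its proof of Corollary~\ref{cor1} (GHZ3 extraction) builds on, whereas your invariant yields only the containment. With that inductive step written out, your proof is complete and would serve as a valid, arguably cleaner, alternative.
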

\vspace{-0.1in}
 \begin{figure}[h]
 	\includegraphics[width=7cm]{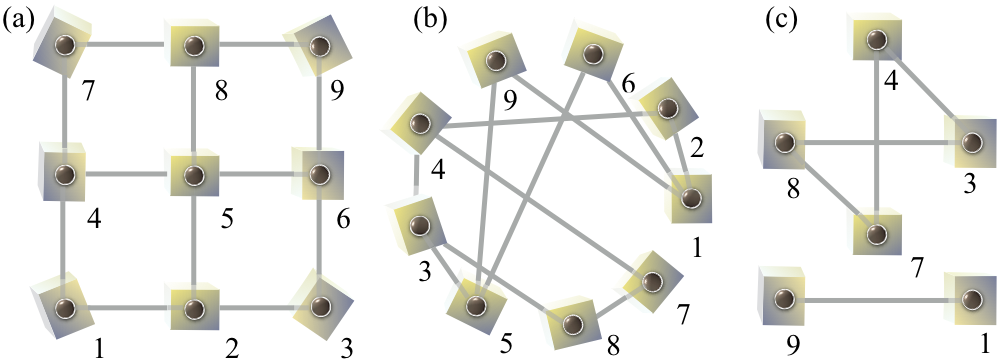}
 	\caption{An EPR pair and a residual graph state are distilled from a cluster state with $9$ qubits	 using the $X$-protocol on the path $(1,2,5,6,9)$. 	This is visualised by considering (b) local complementations with respect to nodes $1,2,5,6,1$, followed by (c) the deletion of nodes $2,5,6$ on the graph that describes the graph state.}
 	\label{fig:9qubitCluster}
 \end{figure}

The proof compares the number of measurements required when running the two different algorithms. 
This technique decreases the number of measurements used in standard repeater scenarios, when we know a pair of nodes that intends to communicate (in this case $a$ and $b$). In particular, it allows for a larger part of the graph state to remain intact for future use. 
Fig.~\ref{fig:9qubitCluster} visualizes how the $X$-protocol for a $9$ qubit cluster state allows us to  communicate between the nodes $1$ and $9$ while keeping a residual  graph state for simultaneous communication between any pair of nodes in $\{3,4,7,8\}$. Here, the residual graph state can be turned into the desired second EPR pair by a single measurement. Note that if we would first isolate the path between nodes $1$ and $9$ and then apply standard repeater protocols, the distillation would require the measurement of at least six nodes and thereby render the extraction of a second EPR pair impossible. 

It is also beneficial to compare our protocol to the standard entanglement swapping methods based on directly sharing EPR pairs over the underlying network. To build the graph states of Fig.~\ref{fig:9qubitCluster}(c) over the underlying grid network using entanglement swapping, we need 12 EPR pairs, which is the same number required to build the cluster state in Fig.~\ref{fig:9qubitCluster}(a). The crucial difference is that, while the cluster state can accommodate more communication requests, the direct generation of the graph states in Fig.~\ref{fig:9qubitCluster}(c) via entanglement swapping limits the communication scenarios that we can implement.

If no more information about future communication requests is available, it is more resource economical 
to choose the shortest path that has the minimal neighborhood. However, the following lemma will be useful in case we would like to allow more than one pair of nodes to communicate simultaneously. Specifically, it gives a visualisation of different possibilities of entanglement generation between nodes.

\begin{lemma}[Equivalence of measurements]\label{lem1}
$X$-measurements along a shortest path between two nodes are equivalent to performing a series of local complementations on the path, followed by $Z$-measurements on the intermediate nodes.
\end{lemma}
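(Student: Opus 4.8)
The plan is to establish the equivalence by tracking how a single $X$-measurement acts on a graph state and then iterating along the path. First I would recall the known rule (Proposition~7 of Ref.~\cite{Hein06}) that an $X$-measurement on a vertex $v$ of a graph state is equivalent, up to local Clifford corrections, to a sequence of local complementations together with the deletion of $v$ from the graph. Concretely, measuring $v$ in the $X$-basis can be implemented as $\tau_{v_0}\circ\tau_v\circ\tau_{v_0}$ applied to $G$ followed by removing $v$, where $v_0$ is some chosen neighbor of $v$; deletion here corresponds to a $Z$-measurement on the graph-state side, since $Z$-measuring a vertex simply isolates and removes it. This single-vertex statement is the atomic building block from which the whole lemma follows.

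Next I would set up notation for a shortest path $P=(a=v_0,v_1,\dots,v_k,v_{k+1}=b)$ and consider $X$-measuring the intermediate vertices $v_1,\dots,v_k$ one after another. The key structural fact I would invoke is that because $P$ is a \emph{shortest} path, the intermediate vertices have a controlled adjacency pattern: along the path each $v_i$ is adjacent to $v_{i-1}$ and $v_{i+1}$, and no ``chords'' shorten the path. This ensures that when I rewrite each $X$-measurement using the local-complementation rule with the neighbor along the path as the distinguished vertex $v_0$, the corrections accumulate in a predictable way rather than interfering destructively. I would then argue by induction on $i$ that after processing $v_1,\dots,v_i$ the residual graph is exactly what one obtains from the prescribed series of local complementations on the path followed by $Z$-measuring (deleting) $v_1,\dots,v_i$.

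The crux of the argument is verifying that the ordered sequence of local complementations named in the $X$-protocol (for the $9$-qubit example, complementing at $1,2,5,6,1$) is precisely the sequence produced by expanding each $X$-measurement into its $\tau$-decomposition and then commuting/combining the pieces. I would carry this out by taking the decomposition $\tau_{v_{i-1}}\circ\tau_{v_i}\circ\tau_{v_{i-1}}$ for measuring $v_i$ and checking that, after the deletion of the already-measured predecessors, the repeated $\tau_{v_{i-1}}$ factors either cancel or merge into the single complementation at the endpoint, leaving one complementation per intermediate vertex plus the boundary complementation at $a$. This is where I expect the main obstacle to lie: bookkeeping the local Clifford corrections and confirming that the chosen distinguished neighbors make consecutive local complementations telescope into the stated short list, rather than proliferating. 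The shortest-path hypothesis is exactly what rules out extra adjacencies that would otherwise spoil this telescoping.

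Finally I would conclude by matching both sides of the claimed equivalence as graph-state transformations. Since local complementation corresponds to a local Clifford unitary (as stated in the Preliminaries, $\ket{\tau_a(G)}=U_a^\tau\ket{G}$) and $Z$-measurement corresponds to vertex deletion, the equality of the two resulting graphs upgrades, up to the omitted local corrections, to the equality of the two quantum procedures. I would close by remarking that the endpoints $a$ and $b$ emerge adjacent (yielding the desired EPR pair once the neighborhoods are cleaned up), which is consistent with and complements Theorem~\ref{thm1}.
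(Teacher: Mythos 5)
Your overall strategy---expanding each $X$-measurement into local complementations plus a vertex deletion ($Z$-measurement) and then telescoping along the path---is the same as the paper's, but your choice of distinguished neighbor breaks the argument at the second step. You propose to implement the measurement of $v_i$ as $\tau_{v_{i-1}}\circ\tau_{v_i}\circ\tau_{v_{i-1}}$ followed by removal of $v_i$. That decomposition of an $X$-measurement is only valid when the distinguished vertex is a neighbor of the measured vertex \emph{in the current graph}; but since you measure the intermediate vertices in order, by the time you reach $v_i$ (for $i\geq 2$) its predecessor $v_{i-1}$ has already been measured and isolated. A local complementation at an isolated vertex is the identity, so your prescription degenerates to $\tau_{v_i}$ followed by deletion---which is the $Y$-measurement rule, not the $X$-measurement rule, and produces a different graph in general. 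Even read purely formally, the concatenation $\tau_{v_0}\tau_{v_1}\tau_{v_0}\cdot\tau_{v_1}\tau_{v_2}\tau_{v_1}\cdots$ has no cancelling adjacent factors (local complementations at distinct vertices do not commute), so the telescoping you hope for never materializes with this choice of neighbors.

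The missing idea, which is the heart of the paper's proof, is to use the \emph{same fixed} neighbor---the endpoint $a$---for every $X$-measurement along the path, i.e.\ $X_{v_i}=LC_a\,LC_{v_i}\,Z_{v_i}\,LC_a$. This choice is legitimate at every step precisely because of the shortest-path hypothesis: just before $v_i$ is measured, both $a$ and $v_{i+1}$ lie in the neighborhood of $v_i$ and are non-adjacent (a chord would contradict minimality), so the complementation at $v_i$ toggles, i.e.\ \emph{creates}, the edge between $a$ and $v_{i+1}$; hence $a$ is again a neighbor of the next vertex to be measured. With this fixed choice, the adjacent factors $LC_a\,LC_a$ between consecutive measurements cancel, yielding exactly one complementation per intermediate vertex bracketed by complementations at $a$ (the sequence $1,2,5,6,1$ in the paper's example). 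Finally, one must still justify pushing the $Z$-measurements past the later complementations to the very end; the paper does this by noting that deleting a vertex commutes with subsequent local complementations, since toggling edges incident to a vertex that will be deleted is irrelevant---a step your proposal assumes implicitly but never argues. As written, your induction cannot be carried out; repairing it requires replacing the predecessor by the fixed endpoint $a$ and adding the edge-creation argument.
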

\vspace{-0.1in}
\begin{figure}[!h]
\centering
	\includegraphics[width=7cm]{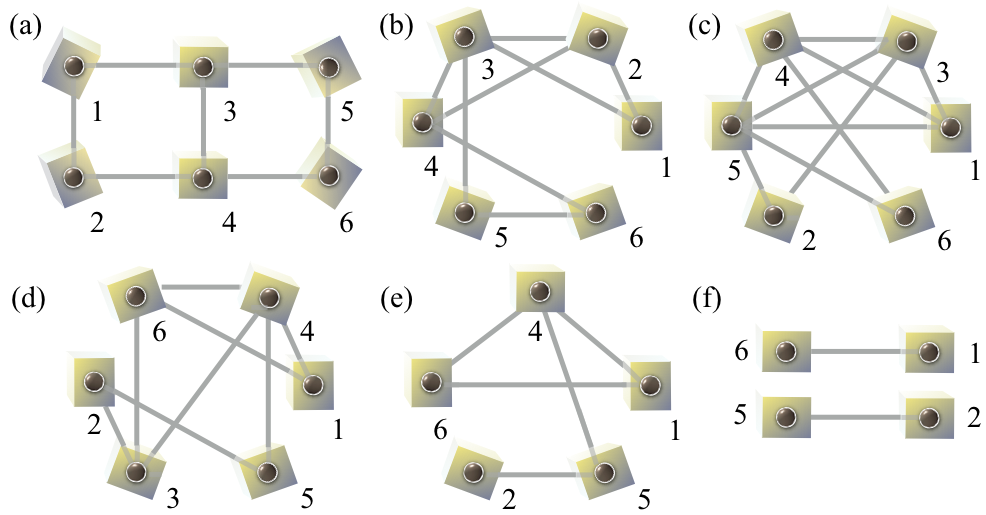}
		\caption{{\bf Establishment of two EPR pairs.} 
		Starting from the butterfly network (a), we perform consecutive local complementations on nodes $1$ (b), $3$ (c) and $4$ (d). 
		The qubits $3$  (e) and $4$ (f) are measured in order to establish EPR pairs between nodes $\{1,6\}$ and $\{2,5\}$.}
		\label{fig:Butterfly}
\end{figure}

Lemma \ref{lem1} allows us to transform the problem of establishing entanglement between nodes into finding suitable graphs by successive local complementations of the network graph. These repeated local complementations generate an orbit, the \emph{LC-orbit} \cite{bouchetLCorbit}. As already mentioned, if the only request for communication is between two nodes, then a shortest path with minimal neighborhood is chosen, in order to minimize the number of measurements. However, if the problem at hand is to connect more than one pair of nodes, the local complementation path will be chosen differently, according to the resulting graph.

Even if not at first apparent, this is the strategy for the well-known butterfly network scheme (Fig.~\ref{fig:Butterfly}), where in order to create EPR pairs between nodes $\{1,6\}$ and $\{2,5\}$, $X$-measurements are done on nodes $3$ and $4$. Via Lemma \ref{lem1} this is equivalent to finding a graph in the LC-orbit of the butterfly, where edges $(1, 6)$ and $(2, 5)$ exist, and no edge between sets $\{1, 6\}$ and $\{2, 5\}$ exists. This graph is found via consecutive local complementations on nodes $1$, $3$ and $4$. A $Z$-measurement on nodes $3$ and $4$ allows to extract the two required EPR pairs (Fig.~\ref{fig:Butterfly}(f)). Note that without the second request for connection of nodes $2$ and $5$, the algorithm might have chosen another path to do $X$-measurements. Similarly, the sequence of subfigures in Fig.~\ref{fig:9qubitCluster} demonstrates the equivalent process for the 9-qubit cluster state. Both this specific communication example and the one presented in the butterfly scheme, create bottlenecks in the network; this is further discussed in the following section.

\subsection{Bottleneck quantum networks}
The butterfly network is of particular interest  when considering bottlenecks in the network. If the nodes can share only one EPR pair over each physical link, one of the butterfly's edges is a bottleneck if we aim to build repeater lines to create entanglement between nodes $\{1,6\}$ and $\{2,5\}$. The above method fulfills the requirement of sharing only one EPR pair per physical link, in order to build the appopriate graph state, and yet solves the communication problem by bypassing the bottleneck in the network. We can further show by exhaustive search that the butterfly network structure is uniquely minimal with respect to the number of nodes.

\begin{proposition}[No bottleneck]
There is no 5-node graph state that has a bottleneck for simultaneous communication between two pairs of nodes and that can be solved using local Cliffords and a Pauli measurement of a single node.
\end{proposition}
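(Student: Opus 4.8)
The plan is to turn the statement into a finite check over all five-vertex graphs and then to isolate the structural reason why a bottleneck and single-measurement solvability cannot coexist. First I would fix the setup: let $G$ be a five-vertex graph state whose terminals are the two communicating pairs $\{a_1,b_1\}$ and $\{a_2,b_2\}$, leaving one non-terminal vertex $c$. Since two EPR pairs occupy exactly four vertices, producing EPR pairs across $\{a_1,b_1\}$ and $\{a_2,b_2\}$ by a single measurement forces $c$ to be the measured vertex. Using Lemma~\ref{lem1} together with the reduction of Pauli measurements on graph states to $Z$-measurements up to local complementation (cf.\ Ref.~\cite{Hein06}), I would recast ``solvable by local Cliffords and a single Pauli measurement'' as the purely graph-theoretic condition that the LC-orbit of $G$ contains a graph $G'$ with $G'\setminus c$ equal to the two-edge matching $\{(a_1,b_1),(a_2,b_2)\}$ and no further edge among the four terminals. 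This is the workhorse equivalence, and it makes solvability depend only on the LC-orbit of $G$ and the labelled target.

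Second I would make the search finite. For solvability it suffices to traverse the small, explicitly enumerable collection of local-complementation orbits of five-qubit graph states, since the condition above is LC-invariant. The bottleneck condition, by contrast, is a property of the physical graph $G$ itself---edge-disjoint routability is not preserved under local complementation---so I would separately run over all $34$ graphs on five vertices up to isomorphism, modulo the symmetry that permutes the two pairs and swaps the endpoints within a pair. Making ``bottleneck'' precise as the nonexistence of two edge-disjoint repeater lines (paths), one for each pair, turns this into an algorithmic test.

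Third comes the case analysis, which I would organize around the obstruction rather than by brute force. A bottleneck on five vertices forces the two terminal pairs to compete for a single shared physical link while only the one relay $c$ is available to reroute, which pins the neighborhood of $c$ and the terminal adjacencies into a sparse, essentially single-bridge configuration. I would then show that these are exactly the graphs failing the solvability test: for the deletion of $c$ to leave two disjoint edges, some LC-equivalent $G'$ must have $c$ mediating both pairs simultaneously, and I expect the degree and cut constraints imposed by the bottleneck to be incompatible with the adjacency pattern of $c$ required by the target. Concretely, I would certify that across the whole LC-orbit of each bottleneck graph, no single deletion at $c$ produces the two-disjoint-edge matching.

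The main obstacle I anticipate is not a single deep step but the mismatch in invariance: solvability is an LC-orbit property whereas the bottleneck lives on the fixed graph $G$, so the argument must track each candidate's entire orbit rather than a convenient representative. I would control this using LC-orbit invariants---cut-rank and the vertex-minor data underlying the rank-width classification \cite{RankWidth,VertexMinor}---to discard at once every orbit that cannot reach the two-disjoint-edge target under any single deletion, leaving only a few graphs on which the bottleneck condition must then be verified directly. The butterfly serves as the consistency check at the boundary: its sixth vertex supplies the second relay that realizes the bottleneck edge while still admitting a solution, showing that five vertices are genuinely one short.
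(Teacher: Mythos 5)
Your proposal is correct and takes essentially the same route as the paper: the paper justifies this proposition purely by exhaustive search over the finitely many five-vertex graphs, which is exactly the finite check you set up. Your explicit reductions --- single-Pauli-measurement solvability as the LC-orbit condition that deleting $c$ leaves the two-edge matching, and the bottleneck as nonexistence of edge-disjoint paths for the two pairs --- merely formalize what the paper leaves implicit before running the same enumeration.
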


\begin{proposition}[Bottleneck]
	There are only four 6-node graph states that have a bottleneck for simultaneous communication between two pairs of nodes and  that can be solved using local Cliffords and Pauli measurements.
	\end{proposition}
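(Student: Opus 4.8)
The plan is to establish the claim by a finite exhaustive enumeration, drastically reduced in size by exploiting the local-complementation equivalence introduced above. Since local Clifford operations correspond exactly to sequences of local complementations, two 6-node graphs lying in the same LC-orbit describe the same entanglement resource and must be counted as a single graph state. I would therefore first replace the $2^{15}$ labelled graphs on six vertices by a complete list of representatives of the LC-orbits of six-vertex graphs up to graph isomorphism, a finite and comparatively small collection that has been fully tabulated in the literature \cite{bouchetLCorbit}. This collapses the problem to a handful of inequivalent classes and makes the remaining case analysis tractable.

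For each orbit representative $G$ I would then range over all ways of selecting two disjoint communication pairs $\{a,b\}$ and $\{c,d\}$ from the six vertices; there are $\binom{6}{4}\cdot 3 = 45$ such choices, with the two remaining vertices playing the role of the nodes to be measured. For each choice I test two predicates. The first is the \emph{bottleneck} predicate: interpreting $G$ as the physical network in which each edge carries a single EPR pair, the configuration is a bottleneck exactly when there are no two edge-disjoint paths simultaneously connecting $a$ to $b$ and $c$ to $d$, a standard edge-disjoint-paths test decidable by direct inspection of the cuts of $G$. The second is the \emph{solvability} predicate: by Lemma \ref{lem1}, extracting the two EPR pairs amounts to finding a graph $H$ in the LC-orbit of $G$ such that, after $Z$-measuring the two spare vertices, the induced configuration on $\{a,b,c,d\}$ consists of precisely the two disjoint edges $(a,b)$ and $(c,d)$ with no edge joining the two pairs. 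Concretely, I would generate the orbit under the maps $\tau_v(\cdot)$, delete the two spare vertices from each graph in it, and test whether any restriction to $\{a,b,c,d\}$ equals the desired two-edge graph.

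I would then collect every representative $G$ admitting at least one pair configuration that passes both predicates and verify that, up to LC-equivalence and isomorphism, exactly four such graph states arise, the butterfly of Fig.~\ref{fig:Butterfly} being the minimal one already singled out in the previous proposition. A final consistency step checks that these four are pairwise LC-inequivalent, so that no overcounting occurs; the automorphisms of each representative can also be used to cut down the $45$ configurations to a few orbit representatives and shorten the search.

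The main obstacle lies in the solvability predicate rather than the bottleneck predicate. The latter is purely classical and local to $G$, whereas solvability requires exploring the entire LC-orbit and, more delicately, correctly accounting for the interplay between local complementation and the two Pauli measurements, including the local Clifford corrections that we have chosen to suppress for clarity. In particular, the order in which complementations and deletions are applied affects what residual graphs are reachable, so the honest test enumerates the admissible $X$, $Y$, and $Z$ measurement patterns on the two spare vertices using the graph transformation rules of Ref.~\cite{Hein06}, rather than naively intersecting the orbit with vertex deletions. One must also ensure the bottleneck is taken with respect to the physical one-EPR-per-link network and not the abstract post-complementation graph. Once this bookkeeping is handled, the finiteness of the search guarantees termination and that the count of four is exact.
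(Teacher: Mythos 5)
Your overall plan --- a finite exhaustive search in which solvability is tested, via Lemma \ref{lem1}, by sweeping the LC-orbit and deleting the two spare vertices --- is the same in spirit as the paper's argument, and your ``orbit-then-delete'' solvability test is in fact fully general (vertex deletions commute with local complementations at the surviving vertices, so your worry about operation ordering is unfounded). The first genuine gap is your opening reduction: you may not quotient by LC-equivalence before testing the bottleneck predicate. Solvability is LC-invariant, but the bottleneck is a property of the \emph{physical} graph, i.e., of which links actually carry an EPR pair, and local complementation changes that edge set. The butterfly's own orbit makes the point: the butterfly has the bottleneck, whereas the LC-equivalent graph of Fig.~\ref{fig:Butterfly}(d) contains the edges $(1,6)$ and $(2,5)$ outright and has none. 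Consequently the set of graphs satisfying ``bottleneck \emph{and} solvable'' is not a union of LC-orbits, and the question ``does this orbit pass both predicates?'' is ill-posed: within a single orbit some members count and others do not. The enumeration must therefore run over the labeled graphs themselves (reduced at most by symmetries that fix the communication pairing), which is what the paper's exhaustive search does.

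The second gap is that your counting convention cannot produce the stated answer of four. The four graphs of the proposition are the relabelings of the butterfly obtained by exchanging labels within $\{3,4\}$ and within $\{1,6\}$; they are pairwise isomorphic as abstract graphs and are counted as distinct precisely because network nodes are distinguishable (labeled), as the paper emphasizes in its discussion of labeled vertex-minors. Counting ``up to LC-equivalence and isomorphism,'' as you propose, collapses all four into a single class, so your final verification step --- finding exactly four pairwise LC-inequivalent classes --- targets a statement that is not what the proposition asserts, and your search would return one class, not four. (A side remark: Ref.~\cite{bouchetLCorbit} gives a recognition algorithm for local equivalence rather than a tabulation of six-vertex LC-orbits.) The correct bookkeeping is to fix the two pairs, say $\{1,6\}$ and $\{2,5\}$, run over all labeled six-vertex graphs, test the bottleneck on each physical graph and solvability on its orbit, and observe that the survivors are exactly the four relabeled butterflies.
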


The only four possible 6-node graphs mentioned in the above proposition are the ones resulting from node relabeling in Fig.~\ref{fig:Butterfly}(a). Specifically, if we intend to establish EPR pairs between nodes $\{1,6\}$ and $\{2,5\}$, we obtain the four graphs by exchanging labels within the sets $\{3,4\}$ and $\{1,6\}$. Note that in allowing arbitrary local Cliffords and Pauli measurements we considered a wider class of possible algorithms than just the aforementioned $X$-protocol.

\subsection{Obtaining GHZ and other multi-partite resources}

As a further aspect, we now turn to the key question of how to extract resource states such as GHZ states from a given graph state. The more general question, whether from a given graph state vector $\ket{G}$ we can extract another graph state vector $\ket{H}$  via a sequence of local measurements, has recently been proven to be {\tt NP}-complete \cite{Dahlberg18}. This was done by solving a well-known problem in graph theory called the {\tt VERTEX-MINOR} problem, which asks whether from a graph $G$, another graph $H$ can be extracted via a sequence of (i) local complementations and (ii) deletion of vertices. Note that the {\tt NP}-completeness of deciding whether graph $H$ is a vertex-minor of $G$ has been proven for labeled graphs, which are relevant for communication scenarios, since the nodes are distinct. 

Having said that, there are polynomial-time algorithms that solve the problem for important instances. A first relevant instance involves GHZ states \cite{GHZ, wallnofer2016two}, which are essential resources for multi-partite schemes in quantum networks beyond point-to-point architectures, such as \emph{quantum secret sharing}  
\cite{SecretSharing,bell2014experimental}. 
Building upon the method described in Theorem \ref{thm1}, we can show the following corollary.

\begin{corollary}[Extraction of GHZ3 states]\label{cor1}
We can always distill a $3$-partite GHZ state between arbitrary vertices of a connected graph state in polynomial time.
\end{corollary}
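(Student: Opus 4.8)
The plan is to translate the task into a purely graph-theoretic one and then exploit connectivity. Recall that the $3$-partite GHZ state is, up to local Clifford operations, the graph state of the triangle $K_3$, which is itself in the local-complementation orbit of the $3$-vertex star (the path $b\!-\!a\!-\!c$) and of the $4$-vertex star $K_{1,3}$ that represents the $4$-partite GHZ state. Hence it suffices to produce, from the connected graph $G$ by local complementations together with $X$- and $Z$-measurements, a graph in which the target vertices $a,b,c$ form a star while every other vertex has been deleted. Since deleting a vertex corresponds to a $Z$-measurement and, by Lemma \ref{lem1}, contracting a shortest path amounts to a sequence of local complementations followed by $Z$-measurements, the whole procedure stays within the local Clifford plus Pauli-measurement class; crucially, the connectivity of $G$ guarantees that the necessary connecting structure exists.

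First I would compute a Steiner ``$Y$-tree'' for the three terminals. Running a breadth-first search from each of $a,b,c$ gives all distances $d(v,a),d(v,b),d(v,c)$ in polynomial time; choosing a median vertex $s$ minimising $d(s,a)+d(s,b)+d(s,c)$ and reading off, from a single shortest-path tree, the paths $P_a,P_b,P_c$ from $s$ to the three terminals yields a tree with a single branch point. The minimality of the distance sum at $s$ forces these arms to be internally vertex-disjoint: if two arms shared the edge leaving $s$ to a common neighbour $w$, then $w$ would strictly lower the distance sum, a contradiction. I would then \emph{isolate} this tree exactly as in the repeater protocol, $Z$-measuring every vertex adjacent to the tree but not belonging to it, and finally \emph{contract} each arm by applying Lemma \ref{lem1} to $P_a,P_b,P_c$ separately. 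Each contraction turns an arm into a single edge, so the three arms collapse to the star $K_{1,3}$ centred at $s$ with leaves $a,b,c$ --- i.e.\ the $4$-GHZ state. A single $X$-measurement on $s$ (equivalently, a local complementation at $s$ followed by its deletion) then collapses $K_{1,3}$ to the triangle on $\{a,b,c\}$, the desired $3$-GHZ; and if the median happens to satisfy $s\in\{a,b,c\}$ the $Y$-tree degenerates into a single path and the contraction alone already produces the $3$-vertex star, so no final measurement is needed. Building on Theorem \ref{thm1}, the number of measurements is at most $|V|$ and every local-complementation update costs $O(|V|^2)$, so the algorithm runs in polynomial time.

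The main obstacle lies in the isolation/contraction step, because a shortest-path arm need not be an \emph{induced} path inside the tree: although no chord can shortcut a single arm, an edge can connect internal vertices of two different arms without violating the median property. Such inter-arm chords survive the $Z$-isolation and would produce spurious edges when the arms are contracted, so the clean ``$K_{1,3}$'' picture is not automatic. I expect to handle this by inserting additional local complementations to erase the offending chords before contracting --- for instance complementing at a chord endpoint and then deleting it as an extra internal node --- and then verifying, via the measurement-update rules for graph states, that the composition of the $X$-measurements along the now genuinely disjoint arms yields exactly the star, and hence a state locally Clifford-equivalent to the $3$-GHZ. Making this cleanup deterministic and confirming that it never pushes the measurement count beyond $O(|V|)$ is the delicate part; everything else is a direct assembly of Lemma \ref{lem1} and the connectivity of $G$.
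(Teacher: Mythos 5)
Your strategy (median vertex $s$, Steiner $Y$-tree, repeater-style isolation, arm contraction via Lemma \ref{lem1}, collapse of the center) is genuinely different from the paper's, but it is not complete, and the gap is exactly the one you flag yourself: the inter-arm chords. After the $Z$-isolation, the induced graph on the tree vertices need not be a tree. Edges can join internal vertices of different arms, terminals of one arm can be adjacent to another arm, and terminals can be adjacent to each other; the median property restricts these chords (for instance, a chord between internal vertices at distances $i$ and $j$ from $s$ on two different arms forces $i=j=1$) but does not exclude them, and you never carry out this analysis. The Lemma \ref{lem1}-based contraction is clean only on a chord-free spider: with chords present, the local complementations implementing each $X$-measurement toggle edges inside neighborhoods that straddle two arms, so spurious edges propagate rather than disappear. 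The repair you sketch --- ``complementing at a chord endpoint and then deleting it as an extra internal node'' --- does not obviously work: that endpoint is an internal vertex of an arm, so $Z$-deleting it disconnects the arm, while the preceding complementation toggles edges among its neighbors on two different arms and can create new chords. Since you explicitly leave this cleanup as ``the delicate part,'' the argument, as written, does not establish the corollary. (Two smaller slips: $K_3$ cannot be in the local-complementation orbit of $K_{1,3}$, since local complementation preserves the vertex set; and ``local complementation at $s$ followed by deletion'' is the $Y$-measurement rule, not the $X$-measurement, whose output on the center of $K_{1,3}$ is the path $b$--$a$--$c$. Neither is fatal, as both outputs are valid GHZ3 graphs.)

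The paper avoids the whole issue by never contracting three arms simultaneously and never introducing a Steiner point. It fixes a shortest path between $a$ and $b$ and does a case analysis on where $c$ sits: if $c$ lies on the path, it $X$-measures the other internal vertices and then $Z$-measures the residual neighborhoods; if $c$ is adjacent to the path but not on it, it runs the full $X$-protocol for the pair $(a,b)$, after which $c$ is guaranteed to lie in $N_a\cup N_b$, and $Z$-measures everything else in that set; if $c$ is far from the path, it runs the $X$-protocol for $(a,b)$, $Z$-measures the neighborhood of $a$, then $X$-measures along a shortest path from $b$ to $c$ --- shortest-path minimality guaranteeing that this second path avoids the neighborhood of $a$ --- and finishes with $Z$-measurements. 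Every step inherits the neighborhood bookkeeping already proved for Theorem \ref{thm1}, where the shortest-path property controls exactly how neighborhoods evolve, so no chord cleanup is ever required. To rescue your route you would need to prove that the restricted chords can always be removed deterministically (e.g., exploiting that cross edges only occur between the arms' first internal vertices); alternatively, the sequential two-stage structure of the paper's proof sidesteps the problem entirely.
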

In order to obtain a $3$-partite GHZ state, we use a slightly altered version of the $X$-protocol. The proof examines different cases corresponding to distinct relative positions of the three vertices within the  graph and is given in the Appendix. We now propose a sufficient criterion in order to extract $4$-partite GHZ states; note that the extraction of a complete graph of four nodes (which is a graph representing a GHZ4) is thought to be difficult in general \cite{PivotNP}.


\begin{proposition}[Extraction of GHZ4 states]\label{Prop3}
We can distill a $4$-partite GHZ state from a graph state when the underlying graph has a repeater line as vertex-minor, which contains all four nodes of the final GHZ state and at least one extra node between two pairs of the nodes.\label{lem:GHZ4}
\end{proposition}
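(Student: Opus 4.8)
The plan is to reduce the statement to a purely graph-theoretic claim about vertex-minors and then exhibit a short, explicit local-complementation sequence. Recall that the GHZ$_4$ state corresponds, up to local Cliffords, to the complete graph $K_4$, which is itself local-Clifford equivalent to the star $S_4$; hence it suffices to show that $S_4$ on the four target nodes $a,b,c,d$ arises as a vertex-minor of the underlying graph $G$. Since the vertex-minor relation is transitive, and by hypothesis the repeater line $R$ is already a vertex-minor of $G$ that contains $a,b,c,d$ together with at least one extra node positioned between the two pairs $\{a,b\}$ and $\{c,d\}$, it is enough to distill the GHZ$_4$ directly from $R$.

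First I would bring $R$ into a canonical form. Using $X$-measurements along the line, which by Lemma~\ref{lem1} and the standard graph-state measurement rules contract interior path vertices by fusing their two neighbors, I would prune all superfluous nodes: extra vertices lying inside a pair are $X$-measured so that the pairs become adjacent, pendant vertices at the ends are $Z$-measured away, and all but one of the extra vertices sitting between the two pairs are contracted. This leaves the five-vertex path $P_5 = a - b - x - c - d$, where $x$ is the single remaining node between the adjacent pairs $\{a,b\}$ and $\{c,d\}$. Crucially, it is the hypothesis that the extra node lies \emph{between} the pairs that guarantees $x$ ends up in the central position; a superfluous node located elsewhere could not play this role.

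The heart of the argument is then a fixed-length routine on $P_5$. I would first apply $\tau_b$, whose neighborhood $\{a,x\}$ gets completed, turning $\{a,b,x\}$ into a triangle while leaving the tail $x - c - d$ intact. I would then $Y$-measure $x$, that is, apply $\tau_x$ and delete $x$: after the first step one has $N_x=\{a,b,c\}$, so complementing removes the edge $ab$ and creates $ac$ and $bc$, and deleting $x$ leaves exactly the edges $ac,bc,cd$. This is the star $S_4$ centered at $c$ with leaves $a,b,d$, hence the desired GHZ$_4$ on $\{a,b,c,d\}$ (local corrections are suppressed as in the Preliminaries). All operations are local Cliffords and Pauli measurements, and once $R$ is identified the procedure has fixed length, so it runs in polynomial time, in the spirit of Corollary~\ref{cor1}.

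The main obstacle I anticipate is not the five-vertex computation, which is immediate, but making the reduction to $P_5$ fully rigorous: one must check that the $X$-measurement contractions genuinely preserve a path whose two pairs remain adjacent and whose single surviving extra node sits centrally, independently of the original length of $R$ and of how the nodes were distributed along it. I would also verify explicitly that the preliminary $\tau_b$ is essential, since $Y$-measuring $x$ without it merely returns the path $P_4$ on $\{a,b,c,d\}$, which is \emph{not} local-Clifford equivalent to $K_4$; this confirms that the ``extra node between the two pairs'' hypothesis is precisely the structural feature the construction exploits.
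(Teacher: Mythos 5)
Your proposal is correct and follows essentially the same route as the paper: use the vertex-minor hypothesis to reduce to the repeater line, then run a short, explicit local-complementation-plus-measurement routine on the five-node path (the paper's line $1$-$2$-$3$-$4$-$5$ is exactly your $P_5 = a\,\text{-}\,b\,\text{-}\,x\,\text{-}\,c\,\text{-}\,d$). The only cosmetic difference is the routine itself: the paper locally complements nodes $2,3,4$ and then $Z$-measures node $3$, producing the complete graph $K_4$, whereas your shorter sequence $\tau_b$ followed by a $Y$-measurement of $x$ (i.e.\ $\tau_x$ then deletion of $x$) produces the star on $\{a,b,c,d\}$ centered at $c$ --- both are local-Clifford equivalent to the GHZ4 state, and your explicit pruning of the line to $P_5$ is, if anything, more careful than the paper's figure-based treatment of that reduction.
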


\begin{figure}[h]
\centering
	\includegraphics[width=7cm]{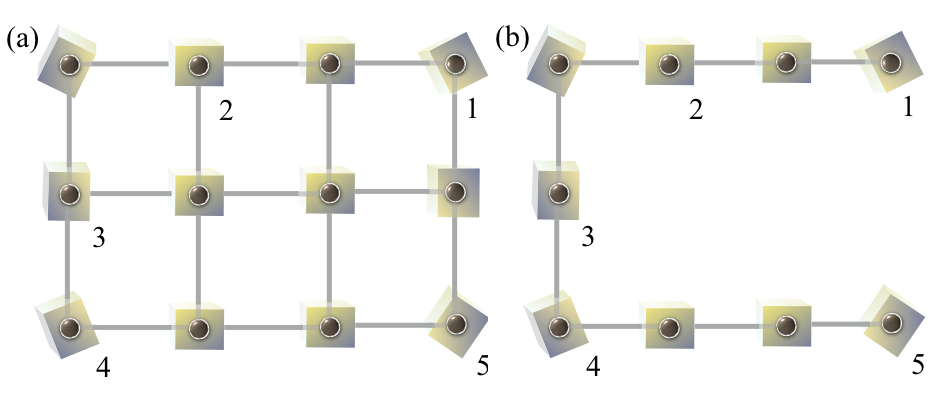}
	\includegraphics[width=7cm]{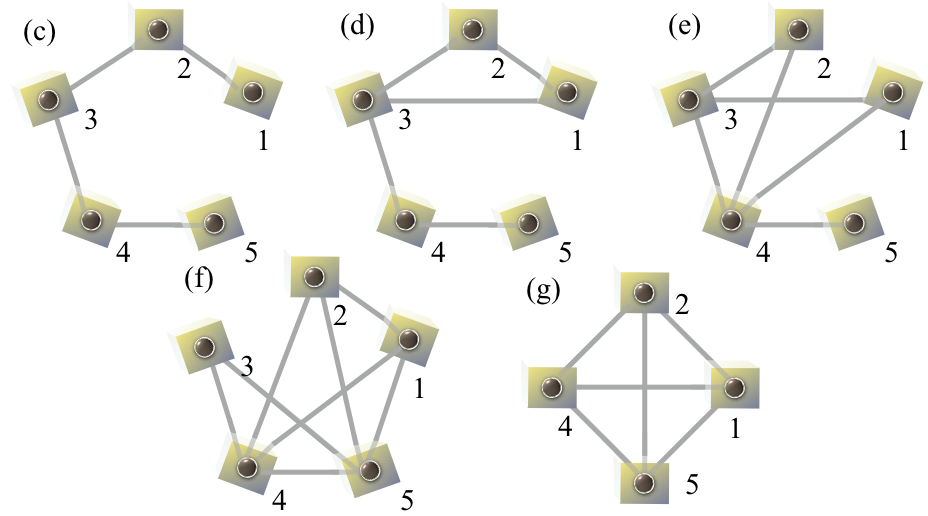}
		\caption{{\bf Prototypical extraction of a GHZ4 state.} 
				In order to distill a GHZ4 state between the nodes $1,2,4$ and $5$ of a $12$ qubit cluster state (a), we $Z$-measure three nodes (b) and $X$-measure the four remaining intermediate nodes, thus isolating a repeater line (c) in accordance with Proposition \ref{Prop3}.
		We then perform consecutive local complementations on nodes $2$ (d), $3$ (e) and $4$ (f). Node $3$  is finally $Z$-measured (g) to establish the desired GHZ4 state.
	}
	\label{fig:GHZ4}
\end{figure}

The required criterion is very likely to be fulfilled for simple network architectures, over which the graph state will be shared. Fig.~\ref{fig:GHZ4} demonstrates this for a short-distance square-grid network, which is used to share a cluster state. Here, Fig.~\ref{fig:GHZ4}(c) visualizes the minimal repeater line that is described in Proposition \ref{Prop3}. The proof of said proposition is given in the Appendix. A more general result using the notion of rank-width \footnote{The \emph{rank-width} $k$ of a graph $G$ is the minimum width of all its rank decompositions. This amounts to $k$ being the smallest integer such that $G$ can be related to a tree-like structure by recursively splitting its vertex set so that each cut induces a matrix of rank at most $k$.  The rank-width is bounded iff the clique-width is bounded \cite{Approximating}. Graphs with rank-width at most one are those where all connected induced subgraphs preserve distance \cite{VertexMinor}.} is based on 
Refs.\
\cite{Dahlberg18,Wehner18,Oum17,courcelleoum,CMR00}.

\begin{observation*}[Extraction of graph states from graph states with bounded rank-width] \label{thm2} For a graph state vector $\ket{G}$ with an underlying graph of bounded rank-width, there exists a poly-time algorithm that decides if a graph state vector $\ket{H}$ can be extracted from $\ket{G}$ using local Clifford operations and $Z$-measurements, and gives the sequence of operations to be applied.
\end{observation*}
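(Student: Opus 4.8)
The plan is to reduce the extraction question to a purely graph-theoretic decision problem and then invoke algorithmic meta-theorems for graphs of bounded rank-width. First I would use the correspondence already established in the Preliminaries and in Ref.~\cite{Dahlberg18}: local Clifford operations act on $\ket{G}$ exactly as sequences of local complementations $\tau_a$ act on $G$, while a $Z$-measurement of a vertex corresponds to its deletion. Hence deciding whether $\ket{H}$ can be extracted from $\ket{G}$ by local Cliffords and $Z$-measurements is equivalent to deciding whether $H$ is a vertex-minor of $G$, i.e.\ whether $H$ can be reached from $G$ by a sequence of local complementations and vertex deletions. Since the communicating nodes are labeled, I would fix the correspondence between the distinguished vertices of $H$ and $G$ and ask for a vertex-minor respecting these labels, which amounts to adding constants for these vertices in the logical encoding below.

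The next step is to express the labeled vertex-minor relation ``$H$ is a vertex-minor of $G$'' as a formula in counting monadic second-order logic $C_2MS_1$ over the vertex-set signature, quantifying over vertices and vertex subsets with predicates for adjacency and for the parity of set cardinalities. The difficulty here is that a vertex-minor may require arbitrarily many intermediate local complementations, so one cannot naively quantify over the transformation sequence itself. Instead I would use the characterization of vertex-minors via the cut-rank function over $\mathrm{GF}(2)$ and the associated algebraic framework due to Bouchet and Oum \cite{VertexMinor}: the existence of a vertex-minor isomorphic to a fixed $H$ is equivalent to the existence of certain vertex subsets satisfying rank conditions on submatrices of $\Gamma_G$ modulo $2$. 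These rank-mod-$2$ and parity conditions are exactly what the modulo-$2$ counting extension of monadic second-order logic captures, following \cite{courcelleoum}; this is why plain $MS_1$ does not suffice but $C_2MS_1$ does, and since such logics are closed under negation, definability of the ``no $H$-vertex-minor'' property yields definability of its complement.

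Having a $C_2MS_1$ formula, I would then apply the Courcelle--Makowsky--Rotics meta-theorem \cite{CMR00} together with its extension to counting logic: any $C_2MS_1$-definable property can be evaluated in time $f(k)\cdot\mathrm{poly}(|G|)$ once a rank-decomposition (equivalently, a clique-width expression) of width $k$ is provided. Since $G$ has bounded rank-width, such a decomposition can be computed in polynomial time using the algorithm of \cite{Oum17}. Composing these steps yields a polynomial-time decision procedure, because $k$ is a constant.

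Finally, for the constructive requirement I would augment the bottom-up dynamic program underlying the meta-theorem so that it records, at each node of the rank-decomposition tree, a partial assignment to the existentially quantified set variables consistent with an accepting run. Back-tracking through these partial solutions reconstructs the witnessing vertex subsets, from which the set of deleted (i.e.\ $Z$-measured) vertices and the relevant local-complementation orbit representative can be read off; the explicit sequence of $\tau_a$ operations realizing the required local equivalence can then be produced with the polynomial-time algorithm for local-complementation equivalence \cite{VandenNest2} and translated back into local Cliffords and $Z$-measurements on $\ket{G}$. The main obstacle throughout is the logical-definability step of the second paragraph: ensuring that the full, unbounded-length vertex-minor relation is faithfully captured by a single fixed $C_2MS_1$ formula via the rank characterization, rather than by quantifying over the transformation sequence, which would fall outside the logic.
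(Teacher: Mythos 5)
Your proposal is correct and follows essentially the same route as the paper: both reduce extraction under local Cliffords and $Z$-measurements to the (labeled) vertex-minor problem, and then exploit bounded rank-width via Oum's rank-decomposition algorithm combined with the counting-MSO definability and meta-theorem machinery of Refs.~\cite{courcelleoum,CMR00}. The only difference is one of packaging: the paper black-boxes this machinery by citing the fixed-parameter vertex-minor algorithm of Ref.~\cite{Wehner18} (which also returns the sequence of operations), whereas you re-derive that algorithm in outline, including the witness-extraction step.
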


For a graph $G$, there exist algorithms with runtime $O(|G|^3)$ \cite{RankWidth} that, for a fixed $k$, either give a rank decomposition of width at most $24k$ or reply that the rank-width is larger than $k$. Then, when such a rank decomposition is given, for a fixed graph $H$, a linear time algorithm can test whether  $H$ is a vertex minor of $G$ and return the sequence of local complementations and vertex deletions to be applied \cite{Wehner18}. Intuitively speaking, many structured graphs have bounded rank-width. E.g., highly sparse random graphs have a bounded rank-width \cite{RandomGraphs}, and so do graphs with a bounded tree-width \cite{oum2008rank}. For those graphs, the above observation readily applies, and it can be decided whether resource states can be extracted. 

\subsection{Discussion}

As an outlook, we mention an exciting link to classical network coding theory:
Schemes have been previously studied for the teleportation of a quantum state from a set of nodes (sources) to another set (sinks), and a connection with classical network coding has been established \cite{Kobayashi09,Kobayashi11,TeleportationReview}. The $k$-pair problem in classical network routing is relevant here, where $k$ sources want to simultaneously send information to $k$ sinks. In subsequent studies, the connection with measurement-based quantum computation \cite{Oneway,gross2007novel}
has been established \cite{debeaudrap14} and subsequently, the question of sharing a general graph state over a network has been addressed \cite{EppingB}. However, the latter work has a shortcoming; the mapping of the network is done using linear codes that require the generation of two-colorable graph states at each node, and it is not straightforward to see how to make this mapping to a given network structure, where each node holds a single qubit. 

In this work, we have discussed the manipulation of multi-partite entangled resources for applications in quantum routing and quantum communication across  quantum networks. A key application of the strategies laid out is
in parallel quantum key distribution and notions of conference key agreement. 
We have seen that via local complementation, quantum routing schemes with a reduced number of measurements
outperforming standard repeater schemes can be found, bottleneck quantum networks can be treated and the
question of extracting multi-partite resources largely addressed.
 It is important to stress that, while these algorithms are classical, they apply
to true multi-partite quantum entangled states. To provide further perspective, also note that since every \emph{stabilizer state} is equivalent to some graph state
\cite{Schlingemann}, the methods laid out here are also expected to be useful in the design of quantum error correcting codes.
It is the hope that this work triggers further studies of
manipulating multi-partite entangled resources for quantum routing, which seem urgently needed in the light of the
rapid experimental progress on quantum networks.

\subsection{Acknowledgements}
We thank S.\ Wehner and A.\ Dahlberg for discussions. A.~P.~acknowledges support from European Union's Horizon 2020 Research and Innovation program under Marie Sklodowska-Curie Grant Agreement No.~705194, J.~E.~support from the BMBF (Q.Com-Q), the 
DFG (EI 519/14-1, EI 519/9-1), the Templeton Foundation, and the ERC (TAQ). Upon completion of this work, we became aware of work similarly motivated \cite{Wehner18,Dahlberg18}.


\section*{Appendix}
\setcounter{theorem}{0}
\begin{theorem}[Creating maximally entangled pairs]\label{thm1_a}
	We can create an EPR pair between two nodes $a$ and $b$ of an arbitrary graph state using the $X$-protocol with at most as many measurements as with the repeater protocol.
\end{theorem}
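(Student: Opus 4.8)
The plan is to fix a shortest path $P=(a=v_0,v_1,\dots,v_k=b)$ between $a$ and $b$ and to compare the two protocols on this common path. Both protocols perform exactly the same $X$-measurements on the $k-1$ intermediate vertices $v_1,\dots,v_{k-1}$, so the entire difference between them lies in the $Z$-measurements. Writing $N(P)$ for the set of vertices \emph{not} on $P$ that are adjacent (in the original graph) to at least one vertex of $P$, the repeater protocol $Z$-measures precisely the $|N(P)|$ vertices of $N(P)$ in order to isolate the line. Hence it suffices to show that the number of $Z$-measurements used by the $X$-protocol is at most $|N(P)|$.

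First I would identify the set of vertices that the $X$-protocol actually $Z$-measures. By Lemma~\ref{lem1}, $X$-measuring the intermediate vertices is equivalent to a sequence of local complementations on path vertices followed by the deletion ($Z$-measurement) of the intermediate nodes. After this reduction $a$ and $b$ are adjacent, and the $X$-protocol finishes by $Z$-measuring every remaining neighbour of $a$ or $b$, i.e. the set $(N_a^{\mathrm{f}}\cup N_b^{\mathrm{f}})\setminus\{a,b\}$, where $N_a^{\mathrm{f}},N_b^{\mathrm{f}}$ denote the final neighbourhoods. The goal therefore becomes the set inclusion $(N_a^{\mathrm{f}}\cup N_b^{\mathrm{f}})\setminus\{a,b\}\subseteq N(P)$.

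The heart of the argument, and the step I expect to be the main obstacle, is a \emph{containment invariant}: every local complementation performed by the $X$-protocol acts only on vertices of the closed neighbourhood $N[P]=P\cup N(P)$, and hence can neither create nor destroy any edge incident to a vertex outside $N[P]$. I would prove this by induction on the sequence of complementations. A complementation at a path vertex $v_i$ toggles edges only within its \emph{current} neighbourhood; assuming inductively that no vertex outside $N[P]$ has yet been touched, the current neighbourhood of $v_i$ agrees with the original one outside $N[P]$, which is empty since $N_{v_i}\subseteq N[P]$ for every $v_i\in P$. Thus the current neighbourhood of $v_i$ stays inside $N[P]$, the complementation keeps all edge changes inside $N[P]$, and the invariant is preserved. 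The subsequent $Z$-measurements merely delete the intermediate path vertices and create no new edges, so the invariant survives to the end.

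Granting the invariant, the final neighbours of $a$ and $b$ lie in $N[P]$; since the intermediate path vertices have been deleted, what remains to be $Z$-measured, namely $(N_a^{\mathrm{f}}\cup N_b^{\mathrm{f}})\setminus\{a,b\}$, is contained in $N(P)$. Counting, the $X$-protocol uses $(k-1)+|(N_a^{\mathrm{f}}\cup N_b^{\mathrm{f}})\setminus\{a,b\}|\le (k-1)+|N(P)|$ measurements, matching the repeater count and proving the theorem. The inequality is strict precisely when some vertex of $N(P)$ is no longer adjacent to either endpoint after the complementations and deletions, which is the mechanism behind the savings illustrated in Fig.~\ref{fig:9qubitCluster}.
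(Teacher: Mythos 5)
Your proposal is correct, but it reaches the theorem by a genuinely different route than the paper. The paper works directly with the symmetric-difference update rule for an $X$-measurement, Eq.~\eqref{xsymmetricdifference}, and grinds through a recursion: it derives $N^{(t)}_{v_1}=N^{(t-1)}_{v_{t+1}}\setminus\{v_1\}$ and the companion formula for $N^{(t)}_{v_{t+2}}$, shows $N^{(t)}_{v_{t+3}}=N^{(0)}_{v_{t+3}}$ by the shortest-path property, and concludes that $N^{(l-2)}_{a}\cup N^{(l-2)}_{b}$ is built from unions, intersections and differences of the \emph{initial} neighbourhoods of path vertices, hence sits inside $N^{(0)}_{v_1}\cup\cdots\cup N^{(0)}_{v_l}$; the $l-2$ isolated intermediate vertices then supply exactly the slack needed to absorb the $X$-measurements. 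You replace all of this bookkeeping by two moves: first invoke Lemma~\ref{lem1} to rewrite the $X$-measurements as local complementations \emph{at path vertices} followed by deletions of the intermediate nodes, and then prove a single containment invariant---every edge toggle stays inside the closed neighbourhood $N[P]$---by an induction whose step is one line (a complementation at $v_i$ toggles edges only inside the current neighbourhood of $v_i$, which by the invariant and $N_{v_i}\subseteq N[P]$ never leaves $N[P]$). The counting then collapses to $(k-1)+\left|\left(N_a^{\mathrm{f}}\cup N_b^{\mathrm{f}}\right)\setminus\{a,b\}\right|\le (k-1)+|N(P)|$, which matches the paper's comparison of Eq.~\eqref{repeaterprotocolcounting} with the $X$-protocol count. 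Two remarks. First, there is no circularity: Lemma~\ref{lem1} is proven in the paper independently of Theorem~\ref{thm1}, so you may use it as a black box; note also that your reliance on it is essential, since the shortest-path property is what guarantees the complementations can all be taken at path vertices---complementing at an off-path neighbour could export edges beyond $N[P]$ and break your invariant. Second, the trade-off: your argument is shorter, more conceptual, and makes the mechanism of the savings transparent (edges can never leak outside $N[P]$, while the intermediate vertices are removed from it), whereas the paper's recursion yields explicit closed-form expressions for the final neighbourhoods of $a$ and $b$, finer information than mere containment, which is in the spirit of what its later case analysis for the GHZ extraction (Corollary~\ref{cor1}) exploits.
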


Before stating the proof of Theorem \ref{thm1} we introduce some additional notation.
For two subsets $A,B \subseteq V$ of vertices  we denote by 
\begin{equation}
E(A,B)\defeq\left\{(a,b):a\in A, b\in B, a\neq b\right\}
\end{equation}
the set of all possible edges between the two sets. Note that $E(A,B)$ in general contains edges that are not contained in the edgeset of an arbitrary given graph $G=(V,E)$.
For a vertex subset $W\subseteq V$ we denote by $E_{|W}$ the subset of $E$ that contains every edge that connects to at least one vertex in $W$.
We may  subtract a set of edges $F$ from $E$. That is, by $E\setminus F$ we denote the set of edges in $E$ that are 
not contained in $F$. 
For such a second set of edges $F$ we also define the \emph{symmetric difference} of $E$ and $F$ as
\begin{equation}
E\Delta F \defeq (E\cup F) \setminus (E\cap F).
\end{equation}
If $F$ happens to be a subset of $E$ the symmetric difference $E\Delta F$ is identical to $E \setminus F$. Otherwise the mutual edges are removed from the union of the two sets.

If the qubit associated with vertex $v\in V$ of a graph state  is $X$-measured, the transformation of the corresponding edge set $E$ can be described in terms of symmetric differences. Independent of a choice $w\in N_v$, the new edge set is given by
\begin{equation}\label{xsymmetricdifference}
\left(E\Delta E_{vw}\Delta E_{v\cap w}\Delta E_{v\setminus w}\right)\setminus E_{|\{v\},} 
\end{equation}
where $E_{vw}\defeq E(N_{w},N_{v})$,  $E_{v\cap w}\defeq E(N_{w}\cap N_{v},N_{w}\cap N_{v})$ and $E_{v\setminus w}\defeq E(\{w\},N_{v}\setminus \{w\})$ are introduced as a shorthand notation. The subtraction of the set containing only the edges that connect to the vertex itself at the end of Eq.~\eqref{xsymmetricdifference} represents the isolation of $v$  due to the measurement. Given a distinct pair of vertices $a,b\in V$,  a \emph{path} of length $k$ from $a$ to $b$ is an ordered list $(v_1,v_2,\ldots, v_k)$ such that $v_1=a$, $v_k=b$, and for all $i\in[k-1]$, vertices $v_i$ and $v_{i+1}$ are adjacent. We denote by $l$ the length of a shortest path from $a$ to $b$ within the graph at hand. 

In the following we will describe  how the neighborhoods $N_{v_i}$ of vertices $v_i$ change due to Pauli measurements on the graph state. To indicate that the graph and therefore some neighborhoods may have changed, we make use of an additional index $t$. By $N^{(t)}_{v_i}$ we denote the neighborhood of node $v_i$ after the $t^{th}$ Pauli measurement on the initially given graph state. We carry this notation over for symmetric differences.  In the expression $E^{(t)}(\cdot, \cdot)$ the $t$ indicates that all involved neigborhoods are regarded after the $t^{th}$ Pauli measurement. From the context it will always be obvious which nodes are measured in which step. In particular
\begin{equation}
N^{(0)}_{v_1}\cup N^{(0)}_{v_2} \cup \ldots \cup N^{(0)}_{v_{k}}
\end{equation}
is the joint neighborhood of a path  $(v_1,v_2,\ldots, v_k)$ in the initially given graph before any measurements are made. 
In our proof we compare two measurement algorithms that both have the goal of establishing an EPR pair between the nodes $a$ and $b$ of a given graph state.
\begin{itemize}
	\item The \emph{repeater  protocol} selects the shortest path connecting $a$ to $b$ that has the minimum combined neighborhood. Every node that lies in the combined neighborhood of this path but not on the path itself is then $Z$-measured. This isolates the path from the rest of the graph creating a repeater line. Finally, every intermediate vertex on the line is $X$-measured yielding the EPR pair between the two nodes.
	\item The \emph{X-protocol} measures the intermediate vertices along the same shortest path  in the $X$ basis . Subsequently, the neighborhoods of the two nodes are $Z$-measured to create the desired EPR pair.  
\end{itemize}

\noindent{}We start our comparison by counting the number of measurements required in the  repeater  protocol. Along the minimal neighborhood path $(v_1,v_2,\ldots, v_l)$ connecting $v_1=a$ to $v_l=b$, every neighboring vertex that does not lie on the path itself is measured in the $Z$ basis. This requires 
$|N^{(0)}_{v_1}\cup N^{(0)}_{v_2} \cup \ldots \cup N^{(0)}_{v_l}|-l$ measurements. To obtain the desired EPR pair from this newly-created repeater line we then measure the vertices $v_2, v_3, \ldots, v_{l-1}$ in the $X$ basis. These $l-2$ measurements remove the intermediate nodes
of the path one by one. The repeater protocol thus requires 
\begin{equation}\label{repeaterprotocolcounting}
|N^{(0)}_{v_1}\cup N^{(0)}_{v_2} \cup \ldots \cup N^{(0)}_{v_{l}}|-2
\end{equation}
Pauli measurements in total to establish the EPR pair.

In order to prove Theorem \ref{thm1}, we will now count the number of measurements required when using the  $X$-protocol. The protocol starts by $X$-measuring along the path $(v_2,\ldots, v_{l-1})$. Here, $t$ indicates the $X$-measurement of node $v_{t+1}$  and $N^{(t)}_{v_i}$ is the neighborhood of $v_i$ after the $t^{th}$ $X$-measurement. We will need the following observation.

\begin{observation*}[Minimizing measurements]\label{equivalentclaim}
	To prove Theorem \ref{thm1}, it suffices to show
	\begin{equation}\label{equivalentEquation}
	N^{(l-2)}_{a}\cup N^{(l-2)}_{b} \subsetneq N^{(0)}_{v_1}\cup N^{(0)}_{v_2} \cup \ldots \cup N^{(0)}_{v_l}
	\end{equation}
	and that we can find at least $l-2$ elements in the neighborhood of the initial (before any measurement) path between $a$ and $b$ that are not contained in the neighborhoods of $a$ and $b$ after the $X$-measurements of the $X$-protocol.
\end{observation*}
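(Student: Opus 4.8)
The plan is to prove the Observation by a pure counting argument, taking its two displayed claims as hypotheses and deducing that the $X$-protocol never uses more measurements than the repeater protocol. The repeater count is already recorded in Eq.~\eqref{repeaterprotocolcounting} as $|N^{(0)}_{v_1}\cup\cdots\cup N^{(0)}_{v_l}|-2$, so the first task is to establish the analogous count for the $X$-protocol. The protocol spends exactly $l-2$ measurements $X$-measuring the intermediate vertices $v_2,\ldots,v_{l-1}$, after which these nodes are isolated and hence absent from every neighborhood. It then $Z$-measures every vertex of the combined neighborhood $N^{(l-2)}_{a}\cup N^{(l-2)}_{b}$ except $a$ and $b$ themselves; since $X$-measuring along a shortest path leaves $a$ and $b$ adjacent, both lie in this combined neighborhood, so this second stage costs $|N^{(l-2)}_{a}\cup N^{(l-2)}_{b}|-2$ measurements. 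Adding the two stages, the $X$-protocol requires $|N^{(l-2)}_{a}\cup N^{(l-2)}_{b}|+l-4$ Pauli measurements in total.

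Next I would compare the two totals. Writing $S\defeq N^{(0)}_{v_1}\cup\cdots\cup N^{(0)}_{v_l}$ and $T\defeq N^{(l-2)}_{a}\cup N^{(l-2)}_{b}$, the assertion of Theorem~\ref{thm1} is precisely the inequality $|T|+l-4\le |S|-2$, which rearranges to $|S|-|T|\ge l-2$. This is the single inequality that the two displayed claims are engineered to deliver. The first claim, Eq.~\eqref{equivalentEquation}, provides the containment $T\subseteq S$, so that $S$ decomposes as the disjoint union $S=T\cup(S\setminus T)$ and hence $|S|=|T|+|S\setminus T|$. The second claim exhibits at least $l-2$ elements lying in $S$ but not in $T$, i.e.\ $|S\setminus T|\ge l-2$. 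Combining the two gives $|S|-|T|=|S\setminus T|\ge l-2$, which is exactly the required inequality, completing the reduction.

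The only point that needs genuine care is the bookkeeping in the $X$-protocol count, and this is where I expect the main subtlety to sit. One must verify that the $l-2$ vertices removed by the $X$-measurements are never recounted in the subsequent $Z$-measurement stage; this follows because an $X$-measured vertex is isolated by the final subtraction of $E_{|\{v\}}$ in Eq.~\eqref{xsymmetricdifference} and therefore cannot belong to $T$. One must likewise confirm that $a,b\in T$, so that the ``$-2$'' appearing in the $Z$-stage count is legitimate. Note that the strict containment asserted in Eq.~\eqref{equivalentEquation} is in fact stronger than the reduction requires: only $T\subseteq S$ is used, while the properness is already implied by the second claim whenever $l\ge 3$. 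With these counts in hand the Observation is immediate, and the remaining work of the proof is to verify the two displayed claims themselves, for which the edge dynamics of Eq.~\eqref{xsymmetricdifference} are the appropriate tool.
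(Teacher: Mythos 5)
Your proposal is correct and takes essentially the same route as the paper: the paper likewise counts the $X$-protocol cost as $(l-2)$ $X$-measurements plus $|N^{(l-2)}_{a}\cup N^{(l-2)}_{b}|-2$ $Z$-measurements (legitimized by $a$ and $b$ being adjacent after the $X$-stage and the measured intermediate vertices being isolated), compares with Eq.~\eqref{repeaterprotocolcounting}, and reduces Theorem~\ref{thm1} to $|N^{(l-2)}_{a}\cup N^{(l-2)}_{b}|+l-2\leq |N^{(0)}_{v_1}\cup N^{(0)}_{v_2}\cup\ldots\cup N^{(0)}_{v_l}|$, which is precisely your inequality $|S|-|T|\geq l-2$ obtained from the containment plus the $l-2$ excluded elements. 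Your remark that only $T\subseteq S$ is actually used, with properness of the inclusion following from the second claim, is a minor but accurate sharpening of the paper's phrasing, which loosely says that Eq.~\eqref{equivalentEquation} alone suffices.
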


\noindent In total, the $X$-protocol requires ($l-2$) $X$-measurements along the shortest path, and subsequently $|N^{(l-2)}_{a}\cup N^{(l-2)}_{b}|-2$
$Z$-measurements on those vertices that have connecting edges to $a$ or $b$ (we need to subtract $a$ and $b$ from the count). From Eq.~\eqref{repeaterprotocolcounting} it follows that Theorem \ref{thm1} holds if 
\begin{equation}
|N^{(l-2)}_{a}\cup N^{(l-2)}_{b}| +l-2 \leq |N^{(0)}_{v_1}\cup N^{(0)}_{v_2} \cup \ldots \cup N^{(0)}_{v_l}|.
\end{equation}

\noindent{}Therefore, in order to prove Theorem \ref{thm1}, it is sufficient to show that Eq.~\eqref{equivalentEquation} is fulfilled. 

\begin{proof}
	
	In the following we will examine how the neighborhoods of $a=v_1$ and 
	$b=v_l$ change with the sequence of $X$-measurements along the shortest path. The first such measurement is at vertex $v_2$. The measurement results in 
	a new graph state with the same set of vertices and with an edge set that can be calculated via a series of symmetric differences, according to Eq.~\eqref{xsymmetricdifference},
	\begin{equation}
	\left(E\Delta E^{(0)}_{v_1v_2}\Delta E^{(0)}_{v_1\cap v_2}\Delta E^{(0)}_{v_2\setminus v_1}\right) \setminus E_{|\{v_2\},} \label{zero}.
	\end{equation}
	By definition of $E(\cdot,\cdot)$ we find
	\begin{align}
		&E^{(0)}_{v_1v_2}=\left\{(x_1,x_2):x_i\in N^{(0)}_{v_i},i =1,2; x_1\neq x_2\right\},\label{one}\\
		&E^{(0)}_{v_1\cap v_2}=\left\{(x,y):x,y\in N^{(0)}_{v_1}\cap N^{(0)}_{v_2} , x\neq y\right\},\label{two}\\
		&E^{(0)}_{v_2\setminus v_1}=\left\{(v_1,x_2):x_2\in N^{(0)}_{v_2}\setminus \{v_1\}\right\}.\label{three} 
	\end{align}
	In the following we analyse the consecutive symmetric differences in 
	Eq.~\eqref{zero} step by step. 
	In particular, we are interested in how the $X$-measurement on $v_2$ changes the neighborhoods of $a=v_1$ and $b=v_l$. 
	Since we have $v_1\in \textcolor{black}{ N^{(0)}_{v_2} }$, the set
	$E^{(0)}_{v_1v_2}$ contains all edges that where connected to $v_1$ before the measurement. 
	From Eq.~\eqref{one} we can thus infer that 
	$N^{(1)}_{v_1}$ does not contain any of the elements that where previously contained in $N^{(0)}_{v_1}$. 
	The second symmetric difference in Eq.~\eqref{zero} does not alter the neighborhood of the starting vertex $v_1$ by virtue of Eq.~\eqref{two} and $v_1$ not being 
	in the intersection of $N^{(0)}_{v_1}$ and $N^{(0)}_{v_2}$. If follows that the only contribution to $N^{(1)}_{v_1}$ comes from Eq.~\eqref{three}. We find
	\begin{equation}
	N^{(1)}_{v_1}=N^{(0)}_{v_2}\setminus \{v_1\}\label{first}
	\end{equation}
	and ascertain that the new graph after the $X$-measurement on $v_2$ has a path $(v_1,v_3,v_4,\ldots,v_l)$ of length $l-1$ connecting
	$a=v_1$ and $v_{l}=b$. 
	We note that this new, shorter path is again a shortest path between $a$ and $b$, since the $X$-measurement only alters the neighborhood of the measured node. The vertex $v_2$ is now isolated, that is, there are no edges that connect it to the remaining graph. The following measurements will remove the
	other intermediate vertices from the path one by one. 
	
	The next $X$-measurement on $v_3$  yields $N^{(2)}_{v_1}=N^{(1)}_{v_{3}}\setminus \{v_1\}$
	and finally after the $t^{th}$ measurement, it holds that
	\begin{equation}
	N^{(t)}_{v_1}=N^{(t-1)}_{v_{t+1}}\setminus \{v_1\}\label{A}.
	\end{equation}
	We now examine how the neighborhood of $v_{t+2}$ is changed by the $t^{th}$ measurement. Before we write down the general expression for $N^{(t)}_{v_{t+2}}$, we consider
	the special case $t=1$, that is, the environment of vertex $v_3$ after the measurement on $v_2$.
	Again, Eq.~\eqref{two} does not contribute to $N^{(1)}_{v_3}$, 
	because $v_3\in N^{(0)}_{v_1}\cap N^{(0)}_{v_2}$ would be a contradiction to $(v_1,v_2,\ldots,v_l)$ being a shortest path before the first measurement. Via Equation
	\eqref{one} we add those elements of $N^{(0)}_{v_1}$ that have not previously been connected to $v_3$ and remove those that where. 
	Compared to $N^{(0)}_{v_{3}}$, the neighborhood $N^{(1)}_{v_{3}}$ also gains the 
	element $v_1$ by virtue of Eq.~\eqref{three}, since $v_3$ is certainly an element of $N^{(0)}_{v_{2}}\setminus \{v_1\}$. 
	To sum up this gives 
	$N^{(1)}_{v_{3}}=\{v_1\}\cup \left(N^{(0)}_{v_{3}}\cup N^{(0)}_{v_{1}}\right) \setminus \left(N^{(0)}_{v_{3}}\cap N^{(0)}_{v_{1}}\right)$ and thus
	
	\begin{equation}\label{B}
	N^{(t)}_{v_{t+2}}=\{v_1\}\cup \left(N^{(t-1)}_{v_{t+2}}\cup N^{(t-1)}_{v_{1}}\right) \setminus \left(N^{(t-1)}_{v_{t+2}}\cap N^{(t-1)}_{v_{1}}\right)\end{equation}
	for the general case after the $t^{th}$ measurement.
	For any $t=2,3,\ldots,l-2$ we can combine Eqs. \eqref{A} and \eqref{B} and obtain the expression
	\begin{equation}
	N^{(t)}_{v_1}=\left(N^{(t-2)}_{v_{t+1}}\cup N^{(t-2)}_{v_{1}}\right) \setminus \left(N^{(t-2)}_{v_{t+1}}\cap N^{(t-2)}_{v_{1}}\right)\label{zwoelf}
	\end{equation}
	In particular we can now write recursive expressions for $N^{(l-2)}_{a}$ and $N^{(l-2)}_{b}$. More specifically, we obtain
	\begin{eqnarray}
	N^{(l-2)}_{v_1}&=&\left(N^{(l-4)}_{v_{l-1}}\cup N^{(l-4)}_{v_{1}}\right) 
	\setminus \left(N^{(l-4)}_{v_{l-1}}\cap N^{(l-4)}_{v_{1}}\right),\nonumber\\
	N^{(l-2)}_{v_l}&=&\{v_1\}\cup \left(N^{(l-3)}_{v_l}\cup N^{(l-3)}_{v_{1}}\right)\setminus \left(N^{(l-3)}_{v_l}\cap N^{(l-3)}_{v_{1}}\right).\nonumber\\\label{start}
	\end{eqnarray}
	Eqs.\ \eqref{start} contain multiple expressions of the type $N^{(t)}_{v_{t+3}}$. These expressions can easily be simplified.
	For instance the case $t=1$ entails $N^{(1)}_{v_{4}}=N^{(0)}_{v_{4}}$, since $v_4\notin N_{v_i}$ for $i=1,2$. Otherwise this would be a contradiction  
	to $(v_1,v_2,\ldots,v_l)$ being the shortest path before the first measurement. By the same argument we recursively get
	\begin{equation}
	N^{(t)}_{v_{t+3}}=N^{(t-1)}_{v_{t+3}}=\ldots=N^{(0)}_{v_{t+3}}
	\end{equation}
	for all $t=1,2,\ldots,l-3$. Together with  Eqs.\ \eqref{zwoelf} and \eqref{start}, this recursively implies that we can write $N^{(l-2)}_{a}$ and $N^{(l-2)}_{b}$ as 
	the union and intersection of sets of the type $N^{(0)}_{v_{i}}$ and $N^{(1)}_{v_{i}}$ where $v_i$ is a vertex on the initial shortest path. For all $v_i$, except for $i=1,2,3$, we
	have $N^{(0)}_{v_{i}} = N^{(1)}_{v_{i}}$, since the negation would imply $(v_1,v_2,\ldots,v_l)$ not being the shortest path before the first measurement. The neighborhood of $v_2$ is empty after the first 
	measurement. Now, Eq.~\eqref{first} and $N^{(1)}_{v_{3}}=\{v_1\}\cup \left(N^{(0)}_{v_{3}}\cup N^{(0)}_{v_{1}}\right) \setminus \left(N^{(0)}_{v_{3}}\cap N^{(0)}_{v_{1}}\right)$
	imply
	\begin{equation}
	N^{(l-2)}_{v_1}\cup N^{(l-2)}_{v_l} \subsetneq N^{(0)}_{v_1}\cup N^{(0)}_{v_2} \cup \ldots \cup N^{(0)}_{v_l}.
	\end{equation}
	The subset relation is proper, because the $l-2$ vertices $v_2,v_3,\ldots,v_{l-1}$ are contained in $N^{(0)}_{v_1}\cup N^{(0)}_{v_2} \cup \ldots \cup N^{(0)}_{v_l}$
	but not in  $N^{(l-2)}_{v_1}\cup N^{(l-2)}_{v_l}$. This implies
	\begin{equation}
	|N^{(l-2)}_{v_1}\cup N^{(l-2)}_{v_l}|+l-4 \leq |N^{(0)}_{v_1}\cup N^{(0)}_{v_2} \cup \ldots \cup N^{(0)}_{v_l}|-2,
	\end{equation}
	which concludes the proof.
\end{proof}

\vspace{0.1in}

\noindent{}Now we will prove Lemma \ref{lem1} and thereby show the equivalence of successive $X$-measurements to $Z$-measurements on a graph in the LC orbit.

\setcounter{lemma}{0}
\begin{lemma}[Equivalence of measurements]\label{lem1}
	$X$-measurements along a shortest path between two nodes are equivalent to performing a series of local complementations on the path, followed by $Z$-measurements on the intermediate nodes.
\end{lemma}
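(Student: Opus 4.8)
The plan is to reduce the lemma to the single-qubit measurement rules for graph states of Ref.~\cite{Hein06}, combined with a commutation identity between local complementation and vertex deletion. First I would recall that, up to local Clifford corrections, a $Z$-measurement of a vertex deletes it, whereas an $X$-measurement of a vertex $v$ with \emph{any} chosen neighbor $w\in N_v$ equals a short sequence of local complementations about $v$ and $w$ followed by the deletion of $v$ (this is the content encoded by the symmetric-difference formula of Eq.~\eqref{xsymmetricdifference}). Applying this rule to each intermediate vertex $v_2,v_3,\ldots,v_{l-1}$ of the shortest path in turn rewrites the whole $X$-protocol as an interleaved sequence of the form (local complementations)(delete $v_2$)(local complementations)(delete $v_3$)$\cdots$(delete $v_{l-1}$).

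Next I would check that every special neighbor invoked by the rule may be taken to lie on the path, so that all local complementations are genuinely ``on the path'' as claimed. This rests on the shortest-path invariant already established in the proof of Theorem~\ref{thm1}: by Eq.~\eqref{first}, after $X$-measuring $v_2$ the vertex $v_1$ acquires the neighborhood $N^{(0)}_{v_2}\setminus\{v_1\}$ and in particular becomes adjacent to $v_3$, so that $(v_1,v_3,\ldots,v_l)$ is again a shortest path; iterating this observation, at every step the vertex being measured still has an unmeasured path neighbor available to serve as $w$.

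The crux is then to push all vertex deletions to the end without altering the resulting graph state. For this I would prove the commutation identity $\tau_w(G)\setminus u=\tau_w(G\setminus u)$ for $u\neq w$: local complementation about $w$ only toggles edges inside $N_w$, and any toggled edge incident to $u$ is erased by the subsequent deletion of $u$, so the two orders agree on the vertex set $V\setminus\{u\}$. Repeatedly sliding each deletion past the local complementations contributed by the later $X$-measurements collapses the interleaved sequence into (all local complementations on the path)(delete $v_2,\ldots,v_{l-1}$), which is precisely the local-complementation-then-$Z$-measurement protocol in the statement.

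I expect the main obstacle to be the bookkeeping in this reordering: one must verify that a deletion of $v_i$ is only ever slid past local complementations about vertices $v_j$ (and their chosen neighbors) with $j>i$, so that the hypothesis $u\neq w$ of the commutation identity is never violated, and that no local complementation is ever required about an already-deleted vertex. The shortest-path invariant of Theorem~\ref{thm1} is exactly what guarantees both facts, since it ensures the measured vertex and its chosen path neighbor are present and adjacent at the moment the rule is applied. Once this is in place, the two procedures agree up to local Clifford corrections, consistent with the paper's convention of omitting such corrections.
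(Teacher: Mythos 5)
Your proposal is correct and follows essentially the same route as the paper's own proof: both decompose each $X$-measurement into local complementations about the measured vertex and a fixed path neighbor ($v_1$, whose availability at every step is guaranteed by the shortest-path invariant) plus a $Z$-measurement/deletion, and both finish by commuting the deletions past the later local complementations. The identity $\tau_w(G)\setminus u=\tau_w(G\setminus u)$ for $u\neq w$ that you isolate as the crux is exactly the paper's observation that $Z_i$ commutes with $LC_j$ because deletion erases any edges the complementation toggles at the deleted vertex.
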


\begin{proof}
	An $X$-measurement of a node is equivalent to locally complementing a neighbor, then locally complementing the actual node and Z-measuring, followed by a final local complementation of the same neighbor. Suppose that the nodes $v_i$, $i=1,\dots, n$ constitute a shortest path. We denote by $X_i$ and $Z_i$ the $X$- and $Z$-measurements on node $i$ respectively, and by $LC_i$ the action of local complementation with respect to the node $v_i$.  Then
	\begin{equation}
	X_2= LC_1~LC_2~Z_2~LC_1
	\end{equation}
	is a valid decomposition of $X_2$ in terms of local complementations and $Z$-measurements.
	If there is no shorter path connecting $v_1$ and $v_3$, this means that the $X_2$ measurement (and more specifically $LC_2$) creates a link between $v_1$ and $v_3$. Therefore, when we measure $X_3$, we can again choose  $v_1$ as a neighbor and find
	\begin{equation}
	X_3= LC_1~LC_3~Z_3~LC_1.
	\end{equation}
	Continuing along the path, we finally find that
	\begin{align}
		X_2\cdots X_{n-1}= LC_1~LC_2~Z_2~LC_3~Z_3~\cdots LC_{n-1}~Z_{n-1}~LC_1,
	\end{align}
	since two consecutive local complementations with respect to the same vertex cancel each other out.
	However, $Z_i$ commutes with $LC_j$ since measuring in $Z$ removes the node and all adjacent edges. If $i\in N_j$, it does not matter whether a local complementation will connect $v_i$ with any other node or not, since all connections will disappear after the measurement. We can therefore push all $Z$-measurements to the end and obtain 
	\begin{align}
		X_2\cdots X_{n-1}= LC_1~\cdots LC_{n-1}~LC_1~Z_2~\cdots ~Z_{n-1}
	\end{align}
	to conclude the proof.
\end{proof}

\vspace{0.1in}
\noindent{}Now, building upon the $X$-protocol, we give a proof of Corollary \ref{cor1} by a short case analysis.

\begin{customcor}{1}[Extraction of GHZ3 states]\label{cor1}
	We can always distill a $3$-partite GHZ state between arbitrary vertices of a connected graph state in polynomial time.
\end{customcor}

\begin{proof}
	Again we take $(a=v_1, v_2, \ldots, v_{l}=b)$ to be a shortest path in the initial graph.
	If $c\in N^{(0)}_{v_1}\cup N^{(0)}_{v_2} \cup \ldots \cup N^{(0)}_{v_{l}}$ lies in the neighborhood of the chosen path, there are two subcases.
	\begin{itemize}
		\item{	If $c=v_i$ for some $i \in \{1,2,\ldots,l\}$,  we measure the vertices $v_2,v_3,\ldots,v_{i-1}$, $v_{i+1},\ldots ,v_{l-1}$ in the $X$-basis. After these $l-3$ measurements every vertex in $N^{(l-3)}_{a}\cup N^{(l-3)}_{b}\cup N^{(l-3)}_{c}\setminus \{a,b,c\}$ is measured in the $Z$-basis.}
		\item{If $c\neq v_i$ for all $i \in \{1,2,\ldots,l\}$, we measure the vertices $v_2,v_3,\ldots ,v_{l-1}$ in the $X$-basis.  After these $l-2$ measurements vertex $c$ is certainly contained in $N^{(l-2)}_{a}\cup N^{(l-2)}_{b}$.
			Every vertex but $c$ from this set is then measured in the $Z$-basis.}
	\end{itemize}
	If $c\notin N^{(0)}_{v_1}\cup N^{(0)}_{v_2} \cup \ldots \cup N^{(0)}_{v_{l}}$, we again measure the nodes $v_2,v_3,\ldots ,v_{l-1}$ in the $X$-basis. Without loss of generality let the shortest path $b=w_1, w_2, \ldots, w_{l'}=c$ be shorter than all the paths  from $a$ to $c$. We continue by measuring all vertices in $N^{(l-2)}_{a}$ in the $Z$-basis followed by $w_2,w_3,\ldots ,w_{l'-1}$ in the $X$-basis. Note that $w_i \in N^{(l-2)}_{a}$ for some $i \in \{2,3,\ldots,l'\}$ would be a contradiction to the shortest path assumption.
	Finally, we measure every vertex but $a$ that lies in the neighborhoods of $b$ and $c$ in the $Z$-basis. All of the above cases result in the desired $3$-partite GHZ state between $a$, $b$ and $c$ independent of the choice of paths.
\end{proof}

\noindent Finally, we turn towards the generation of $4$-partite GHZ states as stated in Proposition \ref{lem:GHZ4}. 	

\setcounter{proposition}{2}
\begin{proposition}[Extraction of GHZ4 states]\label{lem:GHZ4_a}
	We can always distill a $4$-partite GHZ state from graph states when their underlying graph (i) is a repeater line, with at least one extra node between two pairs of the final GHZ4 nodes, or (ii) contains such a line as a vertex-minor.
\end{proposition}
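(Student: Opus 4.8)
The plan is to establish part (i) directly and then to obtain part (ii) as a corollary. For (ii), recall from the preliminaries that a local complementation is implemented by a local Clifford operation, while the deletion of a vertex is implemented by a $Z$-measurement (up to the local corrections we suppress throughout). Consequently, the assertion that the required repeater line $L$ is a vertex-minor of the underlying graph $G$ is precisely the statement that $\ket{L}$ can be produced from $\ket{G}$ by local Cliffords and $Z$-measurements. Once $\ket{L}$ has been extracted, part (i) applied to $L$ completes the construction, so it suffices to treat the case where the underlying graph is itself an admissible repeater line.

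For (i) I would first reduce an arbitrary admissible line to the canonical five-vertex line $a-b-e-c-d$, where $\{a,b\}$ and $\{c,d\}$ are the two pairs of target nodes and $e$ is a single intermediate vertex. By hypothesis the line is a path on which the four target nodes split into two pairs separated by at least one extra vertex. On an isolated line the unique path between any two vertices is automatically a shortest path, so the neighborhood-transformation rules derived in the proof of Theorem~\ref{thm1} apply directly: $X$-measuring an interior vertex bridges its two neighbors and leaves a shorter line. I would therefore $X$-measure every non-target vertex lying strictly between the two members of each pair (rendering each pair adjacent) and every vertex but one of those lying between the two pairs (retaining a single gap vertex $e$), and $Z$-measure any vertices lying beyond the outer target nodes. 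This yields exactly the path $a-b-e-c-d$.

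On this canonical line I would then apply the local complementations $\tau_b$, $\tau_e$, $\tau_c$ in this order, followed by a $Z$-measurement of $e$. A direct evaluation using the local-complementation rule shows that $\tau_b$ adds the edge $(a,e)$; that $\tau_e$ then deletes $(a,b)$ while adding $(a,c)$ and $(b,c)$; and that the final $\tau_c$ together with the deletion of $e$ leaves precisely the complete graph $K_4$ on $\{a,b,c,d\}$. Since $K_4$ is local-Clifford equivalent to the star $K_{1,3}$, which is the graph of a GHZ4 state, this proves (i). The gap vertex $e$ is indispensable: without it the four target nodes would form the path $P_4$, whose LC-orbit is disjoint from that of $K_4$, so no sequence of local Cliffords alone could yield a GHZ4.

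I expect the main obstacle to be making the reduction step uniform over all placements of the target and extra vertices: one must check that the $X$-measurements never act on a target node, that the ``two pairs separated by a gap'' shape of the line is preserved after each measurement, and that exactly one gap vertex can always be retained. Given the explicit description in Theorem~\ref{thm1} of how neighborhoods evolve under $X$-measurements along a shortest path, this amounts to careful bookkeeping rather than a new idea; the genuinely new ingredients are the short local-complementation sequence on the canonical line and the observation that the gap vertex cannot be removed.
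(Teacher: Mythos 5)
Your proposal is correct and follows essentially the same route as the paper: part (ii) is reduced to part (i) by extracting the repeater line as a vertex-minor via local Cliffords and $Z$-measurements, and your sequence $\tau_b,\tau_e,\tau_c$ followed by $Z$-measuring $e$ on the canonical five-vertex line is exactly the paper's consecutive local complementations on nodes $2,3,4$ followed by $Z$-measuring node $3$ (Fig.~\ref{fig:GHZ4}(d)--(g)). If anything, you are more explicit than the paper's rather terse proof, e.g., in verifying that the sequence yields $K_4$ and in noting that the gap vertex is necessary because $P_4$ and $K_4$ lie in different LC-orbits.
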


\begin{proof}
	To be consistent with the figure in the main text, suppose that we want to have a GHZ4 state between nodes with labels $1,2,4$ and $5$ in the original graph state. If the underlying graph has a repeater line as a vertex-minor, we may separate it from the remaining graph state via appropriate local complementations and measurements. By local complementation on the path and measurement on the nodes that are not part of the final GHZ4, we can always distill the required state, as seen in the figure in the main text. For a large subset of graph states there is however a more efficient way to generate the desired GHZ4 state in analogy to the $X$-protocol.
	Without isolating the repeater line first, we may perform local complementations with respect to  nodes $2,3,4$ followed by $Z$-measurements on $3$ and on every vertex that is connected to any of $1,2,4,5$ and does not lie on the repeater-line itself. If, for example, there is no shorter path connecting $1$ to $5$ initially than the one given by the repeater line, the successive local complementations result in graphs that have a subgraph like the ones displayed in the figure in the main text.\\
\end{proof}

\COMMENT{
	\subsection{Workspace}
	
	\je{Here some work space. First note that there is a figure of merit that, as a one-sided test, allows to 
		check whether two graphs states are LC equivalent. This the cut-rank. If the graphs of 
		two graph states have a different cut-rank, then they are LC inequivalent \cite{CutRank,Hein04}. The converse is not 
		true, two isomorphic Peterson graphs provide a counterexample \cite{Peterson}. Comment added:
		Nice, but no longer relevant.}
	
	\subsubsection{Examples for the line}
	No 4-partite GHZ state on the ending nodes of a repeater line?
	No overlapping EPR pairs on a repeater line? \je{Shall we settle this question before submission?
		Not sure.}
	
	\subsubsection{Examples for the ring}
	Evenly distributed 4-partite GHZ state on the ring with $4k, k=2,3,\ldots$ nodes?
	Can we show that the ``repeater'' scheme is the best we can do, for the 2-pair setting?
	\je{Cute. But good for follow-up work.}
	
	\subsubsection{Ideas for 4-partite GHZ states}
	The following corollary is in some sense related to Theorem \ref{thm2}; having no shorter path between nodes $1$ and $5$ than the path 
	$1,2,3,4,5$ is essentially a local sparsity property.
	
}

\end{document}